\DeclareMathOperator{\Tr}{Tr}
\DeclareMathOperator{\Span}{span}
\DeclareMathOperator{\ad}{ad}
\DeclareMathOperator{\sgn}{sgn}
\newtheorem{theorem}{Theorem}
\newtheorem{lemma}{Lemma}
\newcommand{\rmd}{\mathrm{d}}
\newcommand{\rmT}{\mathrm{T}}
\newcommand{\SUN}{\mathrm{SU}(N)}
\newcommand{\PUN}{\mathrm{PU}(N)}
\newcommand{\SO}{\mathrm{SO}}
\newcommand{\SU}{\mathrm{SU}}
\newcommand{\PU}{\mathrm{PU}}
\newcommand{\UN}{\mathrm{U}(N)}
\newcommand{\Un}{\mathrm{U}(\mathbf{n})}
\newcommand{\Um}{\mathrm{U}(\mathbf{m})}
\newcommand{\Unm}{\Un\oplus\Um}
\newcommand{\suN}{\mathrm{su}(N)}
\newcommand{\su}{\mathrm{su}}
\newcommand{\CPN}{\mathbb{C}\mathrm{P}^{N-1}}
\newcommand{\FM}[1][n_{1},\dots,n_{\kappa}]{\mathrm{Fl}({#1})}
\newcommand{\CTL}{\mathcal{E}}
\newcommand{\CTLB}{\mathcal{F}}
\newcommand{\CtlSp}{\mathbb{K}}
\newcommand{\EP}{\mathfrak{e}}
\newcommand{\MFD}{\mathcal{M}}
\newcommand{\VF}{f}
\newcommand{\Gr}[2]{\mathrm{Gr}_{#1}\big(\mathbb{C}^{#2}\big)}
\newcommand{\KSS}{\mathscr{A}} 
\begin{document}

\author{Jason Dominy}
\email{jdominy@usc.edu}
\altaffiliation[Current Address: ]{Center for Quantum Information Science \& Technology, University of Southern California, Los Angeles CA 90089}
\affiliation{Program in Applied and Computational Mathematics, Princeton University, Princeton NJ 08544}
\author{Herschel Rabitz}
\email{hrabitz@princeton.edu}
\affiliation{Department of Chemistry, Princeton University, Princeton NJ 08544}
\affiliation{Program in Applied and Computational Mathematics, Princeton University, Princeton NJ 08544}

\title{Dynamic Homotopy and Landscape Dynamical Set Topology in Quantum Control}

\date{ \today}

\begin{abstract}
We examine the topology of the subset of controls taking a given initial state to a given final state in quantum control, where ``state'' may mean a pure state $|\psi\rangle$, an ensemble density matrix $\rho$, or a unitary propagator $U(0,T)$.  The analysis consists in showing that the endpoint map acting on control space is a Hurewicz fibration for a large class of affine control systems with vector controls.  Exploiting the resulting fibration sequence and the long exact sequence of basepoint-preserving homotopy classes of maps, we show that the indicated subset of controls is homotopy equivalent to the loopspace of the state manifold.  This not only allows us to understand the connectedness of ``dynamical sets'' realized as preimages of subsets of the state space through this endpoint map, but also provides a wealth of additional topological information about such subsets of control space.
\end{abstract}

\pacs{02.30.Yy, 37.10.Jk, 02.40.Re}

\keywords{control theory; quantum control; algebraic topology}

\maketitle

\section{Introduction}
The last few years have seen the concept of control landscapes used in connection with a variety of problems in quantum control theory \cite{Chakrabarti2007}.  Generally, a control landscape is a map, defined implicitly through the control dynamical system, that assigns a real observable value to each admissible control field.  In the case of quantum systems, the landscapes may be conveniently thought of as the composition of two maps: a control$\to$state map and a state$\to$observable map.  The ``state'' in this case is frequently the unitary time-evolution operator $U(T,0)$ which is the general solution to the Schr\"odinger equation at some final time $T$; so the control$\to$state map takes in a control field and a final time $T$ and produces $U(T,0)$.  The state$\to$observable map (often called the ``kinematic landscape'') is a real-valued function on the special unitary group $\SUN$, i.e. it takes in a unitary operator $U$ and produces the value of the final observable.  Under the composition control$\to$state$\to$observable (the so-called ``dynamical landscape''), the goal of optimal control is generally to maximize the observable with respect to the control.

Such landscape formulations have been used to gain some understanding of the critical point structure -- and thus the nature of the gradient flow -- of these optimal control problems for state-to-state transitions \cite{Rabitz2004, Rabitz2006, Hsieh2008}, general quantum mechanical observables on an ensemble \cite{Rabitz2006a, Ho2006, Wu2008}, and unitary transformation (quantum gate) preparation \cite{Rabitz2005, Hsieh2008a, Ho2009}.  In addition, these studies have considered the implications of the vast multiplicity of controls that are all capable of maximizing the landscape value.

In the present paper, we investigate one aspect of this multiplicity of solutions: the topologies of various important subsets of control space.  The main result presented here is a homotopy equivalence, under certain assumptions, between the set of controls carrying one point of a Riemannian manifold to another and the loop space of the reachable set.  A simple corollary of this result in the case of quantum control is that any connected subset of $\SUN$ (such as the maximal submanifold of any of the above kinematic landscapes) has a connected preimage in control space.  The full implications of the wealth of topological information offered by the homotopy equivalence are yet to be understood, but this corollary on connectedness is important in understanding the gross structure of control landscapes as well as the behavior of many optimization and exploration algorithms, especially continuous methods such as D-MORPH \cite{Rothman2005, Rothman2005a, Rothman2006, Beltrani2007, Dominy2008}.
  
This work has a relation to that of Colonius et. al. \cite{Colonius2005, Kizil2008,Kizil2009}, however it is much more closely aligned with the work of Sarychev \cite{Sarychev1991, Sarychev1991a}.  Indeed, the arguments presented in this paper follow very closely those of Sarychev.  Sarychev considered drift-free control problems on general Riemannian manifolds and proved a homotopy equivalence between the space of trajectories with the usual compact-open (i.e., uniform) topology and the loop space of the underlying Riemannian manifold.  We extend his results by incorporating a drift term and pulling the problem back to the control space with a stronger topology.  While we have presented this work in terms of quantum control, the principal theorem is proved for affine control problems with drift on general Riemannian manifolds.

To help place the present work in its proper context, it should be noted that the statement of the main theorem in Sarychev's second paper \cite{Sarychev1991a} on the subject is incorrect.  As stated, the theorem declares that the homotopy equivalence indicated above holds for any ``conic polysystem'' satisfying the Lie algebra rank condition.  If true, that theorem would largely contain the results of Theorem \ref{thm:homotopyEquiv} of the present work.  However, Sarychev's proof in \cite{Sarychev1991a} only covers \emph{symmetric} conic polysystems, which are systems without drift, suggesting that the wording of the theorem was a simple mis-statement.  Moreover, Montgomery \cite{Montgomery1992, Montgomery1996} has constructed a counterexample on $\mathrm{SO}(3)$ based on Little's \cite{Little1970} work on nondegenerate curves on $S^{2}$  which demonstrates that the conclusions of Sarychev's theorem do not hold for all conic polysystems.  However, Montgomery's example restricts the controls to be strictly positive functions.  The importance of this difference in the space of controls is made clear (as in Montgomery's paper \cite{Montgomery1996}) by comparing Little's theorem to Smale's earlier results on regular curves \cite{Smale1958}.  Translated into a scalar affine control problem with drift on $\mathrm{SO}(3)$, Smale's theorem applies to the case in which the control is an arbitrary continuous function of time and yields different topology to that of Little's theorem, which applies to the case where the control function is constrained to be positive valued.  In extending Sarychev's work, we consider a class of control problems with drift (thus generally outside the domain of the proofs offered by Sarychev) in which the space of controls is taken to be large enough to include all piecewise constant functions (thus excluding Montgomery's counterexample).

The paper is organized as follows.  Section \ref{sec:dynamicHomotopy} states and proves the main theorem of the paper: the homotopy equivalence between the fiber of the endpoint map and the loop space of the state manifold.  In section \ref{sec:quantumFiberTop}, this theorem is applied to various formulations of quantum control to determine the topologies of the corresponding fibers.  Methods of algebraic topology are employed in Section \ref{sec:dynamicalSetTopology} to bootstrap from the topologies of the fibers to the topologies of more general preimages of subsets of the state space.  The results are summarized in Section \ref{sec:conclusion}.  Five appendices are included that prove the continuity of various important maps, and describe some similar results that may be obtained for scalar control problems on 2-level systems (i.e. qubits).

\bigskip

\section{Dynamic Homotopy}
\label{sec:dynamicHomotopy}
Consider an affine control problem on an $n$ dimensional, second countable, smooth, connected Riemannian manifold $\MFD$, where the control system is of the form \begin{equation}\frac{\rmd z}{\rmd t} = \VF_{0}(z) + \sum_{i=1}^{m}\CTL_{i}(t)\VF_{i}(z) \qquad\qquad z(0) = z_{0}\in\MFD\label{eqn:affineControlSystem}\end{equation} and we assume the $\VF_{i}$'s are smooth, complete vector fields on $\MFD$ and that the Lie algebra generated by the control vector fields $\{\VF_{1},\dots,\VF_{m}\}$ (excluding the drift $\VF_{0}$) spans the tangent space of $\MFD$ at every point.  This latter assumption necessarily requires that $m$ be at least 2.  These assumptions imply that the system is strongly controllable \cite[Ch. 4, Thm. 2]{Jurdjevic1997}, i.e. that for any pair of points $z_{0},z_{f}\in\MFD$ and any $\hat{T}>0$, there exists a control $\vec{\CTL}$ that will steer the system from $z_{0}$ to $z_{f}$ in some time $0<T<\hat{T}$.  In laboratory implementations of quantum control, the $m=1$ case of scalar control is more common at present, but experiments with $m\geq 2$ can be implemented (see also Appendix \ref{sec:dynHomQubits}).

Now, let $\CtlSp_{0}$ denote the space of $(m+1)$-tuples $(T,\CTL_{1},\dots,\CTL_{m})$ consisting of a final time $T\geq 0$ and $m$ functions in $L^{1}[0,\infty)$ such that $\CTL_{1}(t) = \cdots = \CTL_{m}(t) = 0$ for all $t>T$.  This space will be topologized by the metric 
\begin{widetext}
	\begin{equation}d\big((T_{1},\CTL_{1},\dots, \CTL_{m}), (T_{2}, \CTLB_{1},\dots,\CTLB_{m})\big) = |T_{1}-T_{2}| + \sum_{j=1}^{m}\int_{0}^{\infty}|\CTL_{j}(t)-\CTLB_{j}(t)|\,\rmd t.\end{equation}
\end{widetext}
This $L^{1}$ topology is somewhat weaker than  the corresponding $L^{2}$ topology, but will be necessary for continuity of a certain map later on.  We can also define a concatenation operation on $\CtlSp_{0}$, denoted by $\star$, so $(T_{1}, \CTL_{1},\dots,\CTL_{M})\star(T_{2},\CTLB_{1},\dots,\CTLB_{M}) = (T_{1}+T_{2},\tilde{\CTL}_{1},\dots,\tilde{\CTL}_{2})$, where \begin{equation}\tilde{\CTL}_{k}(t) = \begin{cases}\CTL_{k}(t) & 0\leq t < T_{1}\\\CTLB_{k}(t - T_{1}) & T_{1}\leq t\leq T_{1}+T_{2}\\0 & \text{else}.\end{cases}\label{eqn:controlConcat}\end{equation}

Let $\CtlSp$ denote a subset of $\CtlSp_{0}$ that contains the piecewise constant controls, is closed under concatenation, and is such that for every element of $\CtlSp$, the system \eqref{eqn:affineControlSystem} admits a unique trajectory over the corresponding interval $[0,T]$ for any initial point $z_{0}\in\MFD$ (any such subset will suffice).  We will assume also that if $C = (T,\CTL_{1},\dots,\CTL_{m})\in\CtlSp$, then for any $0\leq T'\leq T$, the truncation of $C$ to the interval $[0,T']$ is also an element of $\CtlSp$.  Now, for any $x\in\MFD$, define a map $\EP_{x}:\CtlSp\to\MFD$ in the following way.  For a control $C = (T,\mathcal{E}_{1},\dots,\mathcal{E}_{m})\in\CtlSp$, integrate the control system \eqref{eqn:affineControlSystem} out to time $T$ with initial condition $z_{0} = x$ and define $\EP_{x}(C) := z(T)$, the solution at the final time $T$ for control $C$.  This $\EP_{x}$ will be called the \emph{endpoint map} on the control space.  It is a continuous map, as shown in Appendix \ref{sec:contEndpoint}.  In addition $\EP_{x}(C_{1}\star C_{2}) = \EP_{\EP_{x}(C_{1})}(C_{2})$.

As indicated in the introduction, the principal result in this paper is the following theorem in which $P\MFD$ is the path space on the state manifold $\MFD$:
\begin{theorem}
For each $x,y\in\MFD$, the subset $\EP_{x}^{-1}(y)\subset \CtlSp$ of controls carrying $x$ to $y$ is homotopy equivalent to $\Omega\MFD$, the loop space of the manifold $\MFD$.  This homotopy equivalence is carried by the restriction of the trajectory map $\tau:\CtlSp\to PB$, defined implicitly through the control system \eqref{eqn:affineControlSystem} with initial condition $x$.\label{thm:homotopyEquiv}
\end{theorem}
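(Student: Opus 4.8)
The plan is to establish the theorem by showing that the endpoint map $\EP_x : \CtlSp \to \MFD$ is a Hurewicz fibration, and then invoking the associated long exact sequence / fibration machinery to identify the fiber. First I would exhibit the standard fibration $PB \to \MFD$ given by evaluation of a path at its endpoint (with $P\MFD$ based at $x$), whose fiber over $y$ is the space of paths from $x$ to $y$, which is homotopy equivalent to $\Omega\MFD$. The key observation is that the trajectory map $\tau : \CtlSp \to PB$ intertwines $\EP_x$ with the path-endpoint evaluation, i.e. the diagram of $\EP_x$ over $\MFD$ and $\mathrm{ev}_1 : PB \to \MFD$ commutes through $\tau$. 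So if I can show $\EP_x$ is itself a Hurewicz fibration and that $\tau$ is a \emph{fiberwise} homotopy equivalence (or at least that the induced map on each fiber $\EP_x^{-1}(y) \to \mathrm{ev}_1^{-1}(y)$ is a homotopy equivalence), the conclusion follows because $\mathrm{ev}_1^{-1}(y) \simeq \Omega\MFD$.

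The heart of the argument is constructing the path-lifting function for $\EP_x$, which I would do by adapting Sarychev's technique to the affine-with-drift setting. Given a control $C_0 \in \CtlSp$ with $\EP_x(C_0) = z_0$ and a path $\gamma : [0,1] \to \MFD$ with $\gamma(0) = z_0$, I need a continuous lift $s \mapsto C_s$ with $\EP_x(C_s) = \gamma(s)$ and $C_s$ agreeing suitably as $s \to 0$. The natural construction is concatenation: $C_s = C_0 \star D_s$, where $D_s$ is a control steering $z_0$ along (a reparametrization of) $\gamma|_{[0,s]}$ to $\gamma(s)$, and the crucial point is that $D_s$ must depend continuously on $s$ in the $L^1$ metric and reduce to the zero control at $s=0$. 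This is where strong controllability and the Lie algebra rank condition on the \emph{control} vector fields (excluding drift) enter: one builds $D_s$ out of piecewise-constant controls whose switching times and amplitudes vary continuously, using the rank condition to realize arbitrary tangent directions and the fact that, over short time intervals, the drift contributes only a higher-order correction that can be compensated. The requirement that $\CtlSp$ contain all piecewise-constant controls and be closed under concatenation and truncation is precisely what makes this construction land back in $\CtlSp$.

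Once path lifting is in hand, the fibration property together with the commuting triangle gives a map of fibration sequences, and the five lemma (applied to the long exact homotopy sequences) shows $\tau$ restricts to a weak homotopy equivalence $\EP_x^{-1}(y) \to \mathrm{ev}_1^{-1}(y) \simeq \Omega\MFD$; upgrading to a genuine homotopy equivalence uses that both spaces have the homotopy type of CW complexes (or a direct Dold-type argument, since $\tau$ is a fiberwise map between fibrations over $\MFD$, so a fiber homotopy equivalence iff a weak equivalence on fibers iff a weak equivalence of total spaces---and here one can show $\tau$ itself is a homotopy equivalence of total spaces, both being contractible, which then forces the fiberwise statement). I expect the main obstacle to be the continuity of the lifting construction in the $L^1$ topology: one must arrange that as the path $\gamma$ is traversed, the corrective control segments $D_s$ vary continuously and shrink to nothing at the start, and that no discontinuity is introduced where the corrective maneuver "resets"; handling the drift term uniformly (so that the time needed for the corrective segment stays bounded and the $L^1$-size estimate is uniform) is the technical crux, and is exactly the point at which this proof must go beyond Sarychev's drift-free argument.
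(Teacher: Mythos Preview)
Your proposal is correct and follows essentially the same strategy as the paper: establish that $\EP_x$ is a Hurewicz fibration via a concatenation-based lifting function built from piecewise-constant controls (with the drift absorbed as a higher-order perturbation), then exploit the contractibility of both $\CtlSp$ and $P\MFD$ together with the fiber-preserving trajectory map $\tau$ to deduce the fiber homotopy equivalence $\EP_x^{-1}(y)\simeq\Omega\MFD$. The only notable difference is that the paper constructs a purely \emph{local} cross-section $\sigma(x,y)$ going directly from $x$ to $y$ (not tracking $\gamma$) and then invokes the Hurewicz uniformization theorem, whereas you sketch a global lift that follows $\gamma|_{[0,s]}$; the paper's route sidesteps having to control the corrective segment uniformly along arbitrarily long paths.
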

For the sake of clarity, fix now some basepoint $b\in\MFD$. The proof of this theorem will involve first showing that the endpoint map $\EP_{b}$ is a (Hurewicz) fibration and then exploiting the resulting fibration sequence and corresponding long exact sequence of base-point preserving homotopy spaces from some arbitrary pointed topological space.

\subsection{The Endpoint Map is a Fibration}
We begin with the proof that $\EP_{b}$ is a fibration \cite{Hurewicz1955,Hatcher2002}.  For a continuous map $p:E\to B$ between topological spaces, let $X_{p}\subset E\times PB$ consist of all pairs $(e, \gamma)$ such that $p(e) = \gamma(0)$, where $PB$ is the path space on B with compact-open topology, and let $\tilde{p}:PE\to X_{p}$ be the map $\tilde{p}(\eta) = (\eta(0), p\eta)$.  This map $p$ is called a (Hurewicz) fibration if there exists a continuous \emph{lifting function} $\lambda:X_{p}\to PE$ such that $\tilde{p}\circ\lambda$ is the identity on $X_{p}$.  A continuous map $p:E\to B$ is a \emph{local} fibration if for each $b\in B$ there exists an open neighborhood $G$ such that $p$ restricted to $p^{-1}(G)$ is a fibration.  The Hurewicz uniformization theorem \cite{Hurewicz1955} states that if $p$ is a local fibration and $B$ is paracompact, then $p$ is a (global) fibration.  Since the endpoint map $\EP_{b}:\CtlSp\to\MFD$ maps to a manifold, the base space is paracompact, so we need only prove that $\EP_{b}$ is a local fibration to have the desired conclusion.  To that end, we first prove the existence of a  particular cross-section map on a neighborhood of $x_{0}\in\MFD$.

\begin{lemma}[The Cross-Section Map]For any $x_{0}\in\MFD$, there exist an open neighborhood $\mathcal{W}$ of $x_{0}$ in $\MFD$ and a continuous function $\sigma:\mathcal{W}\times\mathcal{W}\to\CtlSp$ such that (1) $\EP_{x}(\sigma(x,y)) = y$ for all $x,y\in\mathcal{W}$ (i.e. $\sigma(x,y)$ is a control taking $x$ to $y$) and (2) for all $x\in\mathcal{W}$, $\sigma(x,x)$ is the zero time control with $T=0$ and $\CTL_{i} \equiv 0$ for all $i$.\label{lem:CrossSectionMap}
\end{lemma}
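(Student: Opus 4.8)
The plan is to realise $\sigma$ as a continuous, quantitative refinement of the Chow--Rashevskii controllability construction for the control subsystem $\dot z=\sum_{i=1}^{m}\CTL_{i}(t)\VF_{i}(z)$, and then to reinstate the drift $\VF_{0}$ by a time-rescaling trick that keeps everything inside $\CtlSp$ and continuous for the metric $d$. The first ingredient is a way of encoding flow-compositions of the control vector fields as admissible controls: for $\vec c=(c_{1},\dots,c_{N})\in\mathbb R^{N}$, a word of indices $i_{1},\dots,i_{N}\in\{1,\dots,m\}$, and a gain $K>0$, let $C_{K}(\vec c)\in\CtlSp$ be the piecewise constant control whose $r$-th block has length $c_{r}^{2}/K$ and sets $\CTL_{i_{r}}\equiv K/c_{r}$ with all other components zero (the $r$-th block being absent when $c_{r}=0$). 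This control has total duration $K^{-1}\sum_{r}c_{r}^{2}$ and $L^{1}$-size $\sum_{r}|c_{r}|$, so $\vec c\mapsto C_{K}(\vec c)$ is continuous into $(\CtlSp,d)$ and carries $\vec c=0$ to the zero-time control. On its $r$-th block the full system flows for unit time along $\tfrac{c_{r}^{2}}{K}\VF_{0}+c_{r}\VF_{i_{r}}$, so
\[
\EP_{x}\big(C_{K}(\vec c)\big)=\Phi^{K}_{x}(\vec c):=\exp\!\big(\tfrac{c_{N}^{2}}{K}\VF_{0}+c_{N}\VF_{i_{N}}\big)\circ\cdots\circ\exp\!\big(\tfrac{c_{1}^{2}}{K}\VF_{0}+c_{1}\VF_{i_{1}}\big)(x),
\]
which is smooth jointly in $(x,\vec c,K^{-1})$, reduces at $K^{-1}=0$ to the drift-free composition $\Phi^{\infty}_{x}(\vec c):=\exp(c_{N}\VF_{i_{N}})\circ\cdots\circ\exp(c_{1}\VF_{i_{1}})(x)$, and satisfies $\Phi^{K}_{x}(0)=x$ for every $K$. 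It therefore suffices to produce a word together with a continuous $\vec c(x,y)$, defined for $(x,y)$ near $(x_{0},x_{0})$, such that $\vec c(x,x)=0$ and $\Phi^{K(x,y)}_{x}\big(\vec c(x,y)\big)=y$ for a suitable continuous gain $K(x,y)$, and then to set $\sigma(x,y):=C_{K(x,y)}\big(\vec c(x,y)\big)$.

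The crux is to find such a $\vec c$ for the drift-free system, and this is where the Lie-algebra rank condition on $\{\VF_{1},\dots,\VF_{m}\}$ enters. Choose iterated brackets $X_{1},\dots,X_{n}$ of the $\VF_{i}$, of bracket-lengths $w_{1}\le\cdots\le w_{n}$, spanning $T_{x_{0}}\MFD$; by continuity they span on a whole neighborhood of $x_{0}$. Using the commutator-flow identity $\exp(-t\VF_{i})\exp(-t\VF_{j})\exp(t\VF_{i})\exp(t\VF_{j})=\exp\!\big(t^{2}[\VF_{j},\VF_{i}]+O(t^{3})\big)$ together with its higher iterates, build for each $X_{j}$ a flow-composition word $\psi^{j}_{s}$ in the control fields with $\psi^{j}_{0}=\mathrm{id}$ that advances in the direction $X_{j}$ by an amount of order $s$ while using flow-times of order $|s|^{1/w_{j}}$; then $\psi_{(s_{1},\dots,s_{n})}:=\psi^{n}_{s_{n}}\circ\cdots\circ\psi^{1}_{s_{1}}$ is, in privileged coordinates at $x_{0}$, a triangular map with invertible weighted-homogeneous principal part, hence by the ball--box theorem a homeomorphism of a neighborhood of $0\in\mathbb R^{n}$ onto a neighborhood of $x_{0}$, and this depends continuously on the base point. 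Inverting, and unwinding $\psi$ into a single flow-composition word, produces a continuous $\vec c_{0}(x,y)$ with $\vec c_{0}(x,x)=0$ and $\Phi^{\infty}_{x}\big(\vec c_{0}(x,y)\big)=y$ near $(x_{0},x_{0})$, with $\|\vec c_{0}(x,y)\|\to0$ as $y\to x$.

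It remains to pass from $\Phi^{\infty}$ to $\Phi^{K}$. On a parameter-ball of radius $\rho$ the discrepancy $\Phi^{K}_{x}-\Phi^{\infty}_{x}$ is a higher-order perturbation, of size controlled by $\rho^{2}/K$, while $\Phi^{\infty}_{x}$ covers a definite neighborhood of $x$; choosing the gain $K=K(x,y)$ large, and growing as $y\to x$, makes this perturbation negligible against the displacement, and a Banach fixed-point correction of $\vec c_{0}$ yields a continuous $\vec c(x,y)$, still vanishing on the diagonal, with $\Phi^{K(x,y)}_{x}\big(\vec c(x,y)\big)=y$. Taking $\mathcal W$ a small enough neighborhood of $x_{0}$ (small enough that it lies in the good neighborhood of each of its points, uniformly) and setting $\sigma(x,y):=C_{K(x,y)}\big(\vec c(x,y)\big)$ — with $\sigma(x,x)$ the zero-time control, which is consistent with the formula since $\vec c(x,x)=0$ — gives the asserted map: (1) holds by construction, (2) is immediate, and continuity into $(\CtlSp,d)$ follows because $\sigma$ depends continuously on $(\vec c,K)$ wherever $K$ is finite, while near the diagonal the $d$-distance of $\sigma(x,y)$ to the zero-time control is at most its duration plus its $L^{1}$-size, both of which tend to $0$ as $y\to x$. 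Dependence on the initial point $x$ is harmless throughout, since all the flows involved, and hence the ball--box construction, depend smoothly on their base points.

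The main obstacle is the middle step: the \emph{infinitesimal} bracket-generating hypothesis must be converted into a local section that is simultaneously continuous \emph{and} identically the zero-time control on the diagonal. Because genuine Lie brackets are generally needed (no flow-composition word has surjective differential at the origin unless the $\VF_{i}$ already span $T_{x}\MFD$ pointwise), this cannot be achieved by a direct implicit-function argument and requires the graded ball--box construction, after which the drift must still be absorbed as a higher-order perturbation without spoiling continuity or the behavior on the diagonal. If, on the other hand, $\{\VF_{1},\dots,\VF_{m}\}$ already spans $T_{x}\MFD$ at each $x$, the middle step collapses to applying the implicit function theorem to $\Phi^{\infty}_{x}$ at $\vec c=0$, and no ball--box machinery is needed.
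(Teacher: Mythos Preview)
Your route is a legitimate alternative to the paper's, but differs in two essential respects, and your dismissal of the implicit-function-theorem approach is mistaken. The paper \emph{does} use a direct IFT argument. Your observation that a flow-composition word has non-surjective differential at the origin (in the flow-time parameters) is correct, but the paper circumvents it by reparametrizing: writing the flow-times in the $k$-th commutator word as $|r_{k}|^{1/\nu_{k}}$ (with a sign on exactly one factor), the resulting map $r\mapsto F_{1}(x,r)$ becomes $C^{1}$ at $r=0$ with $\partial F_{1}/\partial r_{k}(x_{0},0)=\Xi_{k}(x_{0})$, the $k$-th iterated bracket. The brackets thus appear as honest partial derivatives, the classical IFT yields a $C^{1}$ inverse $\varphi(x,y)$ with $\varphi(x,x)=0$, and no ball--box machinery or privileged coordinates are needed. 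Moreover the paper does not split off the drift and then perturb: it replaces each factor $e^{\pm\xi_{k}Y_{k}}$ in the commutator words by $e^{\xi_{k}^{2\alpha_{\nu}}\VF_{0}\pm\xi_{k}Y_{k}}$ with $2\alpha_{\nu}>\nu$, so the drift contributes only at order $>\nu$ in the $\xi$'s, hence at order $>1$ in $r_{k}$ after the substitution, and is absorbed into the higher-order remainder already handled by the IFT step. This single device replaces both your ball--box inversion and your variable-gain fixed-point correction.

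Your approach---ball--box inversion of the drift-free flow composition, followed by a perturbative correction for the drift---should also work, and has the virtue of separating the two difficulties cleanly. Its least-developed step is the Banach fixed-point correction: the inverse of $\Phi^{\infty}_{x}$ supplied by the ball--box theorem is in general only H\"older-continuous in the Euclidean sense (with exponent governed by the weights $w_{j}$), so the perturbation $\Phi^{K}_{x}-\Phi^{\infty}_{x}=O(\|\vec c\|^{2}/K)$ must be compared against that modulus rather than against the raw displacement. Arranging $K(x,y)$ so that the correction contracts, depends continuously on $(x,y)$, blows up only on the diagonal, and still forces the duration and $L^{1}$-size of $\sigma(x,y)$ to zero there, requires more care than your sketch provides. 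The paper's reparametrization-plus-IFT device sidesteps all of this at the cost of a slightly opaque choice of exponents.
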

\begin{proof}
Given a sequence of generators $Y_{1},\dots, Y_{k}$, define the a sequence of compositions of formal exponentials by
\begin{widetext}
\begin{subequations}
\begin{align}
Q^{1}(Y_{1}) & := e^{Y_{1}}\\
Q^{2}(Y_{1},Y_{2}) & := e^{Y_{2}}\circ e^{Y_{1}}\circ e^{-Y_{2}}\circ e^{-Y_{1}}\\
\vdots & \nonumber\\
Q^{\nu}(Y_{1},\dots,Y_{\nu}) & := e^{Y_{\nu}}\circ Q^{\nu-1}(Y_{1},\dots,Y_{\nu-1})\circ e^{-Y_{\nu}}\circ \big(Q^{\nu-1}(Y_{1},\dots,Y_{\nu-1})\big)^{-1}.
\end{align}
\end{subequations}
\end{widetext}
If $\{Y_{1},\dots,Y_{\nu}\}$ are sufficiently close to zero, then using the Baker-Campbell-Hausdorff (BCH) formula \cite{Hall2003}, it is easy to see that $Q^{2}(Y_{1},Y_{2}) = e^{[Y_{2},Y_{1}] + \{\text{terms of order} > 2\}}$.  And if $Q^{\nu-1}(Y_{1},\dots, Y_{\nu-1}) = e^{\ad Y_{\nu-1}\dots\ad Y_{2}Y_{1} + \{\text{terms of order}> \nu-1\}}$, then the BCH formula implies that $Q^{\nu}(Y_{1},\dots, Y_{\nu}) = e^{\ad Y_{\nu}\dots \ad Y_{2}Y_{1} + \{\text{terms of order} > \nu\}}$.  By induction, this is true for all $\nu = 1,2,\dots$.

Now, define a corresponding sequence of maps $R^{\nu}(Y_{0},Y_{1},\dots,Y_{\nu},\xi_{1},\dots,\xi_{\nu})$, where $\{\xi_{i}\}\in\mathbb{R}$, by taking the sequence of exponentials in $Q^{\nu}(Y_{1},\dots,Y_{\nu})$ (\emph{after} having carried out the recursion in the above definition) and replacing $e^{\pm Y_{k}}$ by $e^{\xi_{k}^{2\alpha_{\nu}}Y_{0} \pm \xi_{k}Y_{k}}$ for some positive integer $\alpha_{\nu}$.  Then $R^{\nu}$ can be written, using BCH, as the exponential of a series of terms drawn from $\{\xi_{1}^{2\alpha_{\nu}}Y_{0}, \dots, \xi_{\nu}^{2\alpha_{\nu}}Y_{0}, \xi_{1}Y_{1}, \dots, \xi_{\nu}Y_{\nu}\}$ and the Lie brackets thereof.  Among the terms in this sum that do not include $Y_{0}$, the lowest degree in the $\xi_{k}$'s is $\xi_{1}\cdots \xi_{\nu}\ad Y_{\nu}\dots\ad Y_{2}Y_{1}$, while the lowest degree term that includes $Y_{0}$ is degree $2\alpha_{\nu}$ in the $\xi_{k}$'s.  So if we choose $\alpha_{\nu} = \lfloor \nu/2 +1 \rfloor > \nu/2$, then 
\begin{widetext}
\begin{equation}R^{\nu}(Y_{0},Y_{1},\dots,Y_{\nu},\xi_{1},\dots,\xi_{\nu}) = e^{\xi_{1}\cdots\xi_{\nu}\ad Y_{\nu}\dots\ad Y_{2}Y_{1} + \{\text{terms of degree } > \nu \text{ in } \{\xi_{i}\}\}},\end{equation}
\end{widetext}
where \emph{degree} refers to the polynomial degree in the variables $\{\xi_{1},\dots,\xi_{\nu}\}$, rather than the number of Lie brackets or vector fields involved.

Using the fact that the Lie algebra generated by the control vector fields $\{\VF_{j}\}_{j=1}^{m}$ from \eqref{eqn:affineControlSystem} spans the entire tangent space $\rmT_{x}\MFD$ for each $x$, fix $\{X_{kj}\}\subset \Span\{\VF_{1},\dots, \VF_{m}\}$ such that $\{\Xi_{k}:=\ad X_{k\nu_{k}}\dots \ad X_{k2}X_{k1}\}$ for $k=1,\dots, n$ forms a basis for $\rmT_{x_{0}}\MFD$ and define $F_{1}:\mathcal{W}\times \mathbb{R}^{n}\to\MFD$ by 
\begin{widetext}
\begin{subequations}
\begin{align}F_{1}(x,r) & := \left(\prod_{k=1}^{n}R^{\nu_{k}}\left(\VF_{0}, X_{k1},X_{k2},\dots, X_{k\nu_{k}}, \sgn r_{k}|r_{k}|^{1/\nu_{k}}, |r_{k}|^{1/\nu_{k}}, \dots, |r_{k}|^{1/\nu_{k}}\right)\right)(x)\\
& = e^{r_{n}\Xi_{n} + \{\text{higher order terms in } r_{n}\}}\cdots e^{r_{1}\Xi_{1} + \{\text{higher order terms in } r_{1}\}}(x)
\end{align}
\end{subequations}
\end{widetext}
where the ``higher order terms in $r_{k}$'' each have $r_{k}$ dependence of either $|r_{k}|^{1+j/\nu_{k}}$ or $r_{k}|r_{k}|^{j/\nu_{k}}$ for some positive integer $j$, so the higher order terms are $C^{1}$ functions of $r_{k}$ with derivative zero at $r_{k}=0$.
Then $F_{1}$ is $C^{1}$, $F_{1}(x, 0) = x$ for all $x\in\MFD$, and $\frac{\partial F_{1}}{\partial r_{k}}(x_{0},0) = \Xi_{k}(x_{0}) = \ad X_{k\nu_{k}}\dots \ad X_{k2}X_{k1}(x_{0})$.  Because these nested Lie brackets were chosen to be linearly independent, the Jacobian operator $\frac{\partial F_{1}}{\partial r}(x_{0},0):\mathbb{R}^{n}\to\rmT_{x_{0}}\MFD$ is full rank, hence an isomorphism.  So by the implicit function theorem, there exists a neighborhood $\mathcal{W}\subset\MFD$ of $x_{0}$ and a unique differentiable (i.e., $C^{1}$) map $\varphi:\mathcal{W}\times\mathcal{W}\to\mathbb{R}^{n}$ such that $\varphi(x,x) = 0$ and $F_{1}(x,\varphi(x,y)) = y$ for all $x,y\in\mathcal{W}$.

Now, observe that the sequence of exponentials in $R^{\nu}(Y_{0}, Y_{1},\dots,Y_{\nu}, \xi_{1},\dots,\xi_{\nu})$ can be interpreted as the solution of the control system \eqref{eqn:affineControlSystem} with piecewise constant controls.  For example, in the case of $R^{2}(\VF_{0}, Y_{1},Y_{2},\xi_{1},\xi_{2})$, where $Y_{j} = \sum_{k=1}^{m}\kappa_{jk}\VF_{k}$, the control $C\in\CtlSp$ has final time $T =  2\xi_{1}^{4} + 2\xi_{2}^{4}$ and control functions
\begin{equation}
\CTL_{k}(t) = \begin{cases}-\kappa_{1k}\xi_{1}^{-3} & 0<t<\xi_{1}^{4}\\ 
	-\kappa_{2k}\xi_{2}^{-3} & \xi_{1}^{4}<t<\xi_{1}^{4} + \xi_{2}^{4}\\
	+\kappa_{1k}\xi_{1}^{-3} & \xi_{1}^{4} + \xi_{2}^{4}<t<2\xi_{1}^{4} + \xi_{2}^{4}\\
	+\kappa_{2k}\xi_{2}^{-3} & 2\xi_{1}^{4} + \xi_{2}^{4}<t<2\xi_{1}^{4} + 2\xi_{2}^{4}.
\end{cases}
\end{equation} 
These controls can be shown to vary continuously with $\{\xi_{k}\}$ in the metric established for the control space $\CtlSp$ (see Appendix \ref{sec:contCrossSec}).  Now define $F:\mathbb{R}^{n}\to \CtlSp$ to be the map $\EP_{x}\big(F(r)\big) = F_{1}(x,r)$ that maps $r\in \mathbb{R}^{n}$ to the piecewise constant control in $\CtlSp$ whose integral curve corresponds exactly to the sequence of exponentials in the definition of $F_{1}(x,r)$.  Then $F$ is a continuous map, and $\sigma:\mathcal{W}\times\mathcal{W}\to\CtlSp$ defined by $\sigma(x,y) = F(\varphi(x,y))$ is also a continuous map.  In addition, since $\varphi(x,x) = 0$, $\sigma(x,x)$ is the zero time control, because all of the $\xi_{k}$'s are zero, meaning that the final time $T$ is zero.  The map $\sigma$ is therefore exactly the desired cross-section map.
\end{proof}

We now return to the definition of a local fibration and define a function $\lambda_{\mathcal{W}}:X_{\mathcal{W}}\to P\CtlSp$, where $X_{\mathcal{W}}\subset \EP_{b}^{-1}(\mathcal{W})\times P\mathcal{W}$ is the set of all $(C,\gamma)$ such that $\EP_{b}(C) = \gamma(0)$, by letting $\lambda_{\mathcal{W}}(C,\gamma)(s) := C\star\sigma(\EP_{b}(C),\gamma(s))$.  Since $\lambda_{\mathcal{W}}$ is continuous (see Appendix \ref{sec:contLifting}) and $\big(\EP_{b}|_{\EP^{-1}(X_{\mathcal{W}})}\big)\circ\lambda_{\mathcal{W}}$ is the identity on $X_{\mathcal{W}}$, $\lambda_{\mathcal{W}}$ is a lifting function for $\EP_{b}\big|_{\EP_{b}^{-1}(\mathcal{W})}$.  Therefore $\EP_{b}$ is a local Hurewicz fibration and, by the Hurewicz uniformization theorem, also a global fibration.  Note that $\lambda_{\mathcal{W}}$ has the additional property that for a concatenation of unitary paths $\gamma_{1}\star\gamma_{2}$, $\lambda_{\mathcal{W}}(C,\gamma_{1}\star\gamma_{2}) = \lambda_{\mathcal{W}}(\lambda_{\mathcal{W}}(C,\gamma_{1}),\gamma_{2})$.

\subsection{Topology of the Fiber}
In order to establish the homotopy equivalence promised by Theorem \ref{thm:homotopyEquiv}, the following lemma is needed.
\begin{lemma}
\label{lemma:fiberHomotopyEquiv}
For any fibration $p:E\to B$ with contractible total space $E$ and path connected base $B$, the map $\eta:\Omega B\to F = p^{-1}(b_{0})$ given by $\eta(\gamma) = \lambda(e_{0},\gamma)(1)$ is a homotopy equivalence.  Moreover, \emph{any} map $F\to\Omega B$ that extends to a map $g:E\to PB$ such that $g(e)(1) = p(e)$ is a homotopy inverse of $\eta$, hence also a homotopy equivalence.
\end{lemma}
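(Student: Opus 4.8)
The plan is to exhibit $\eta$ and $g|_{F}$ as an explicit pair of mutually inverse homotopy equivalences, by first producing maps of fibrations over $B$ at the level of total spaces and then restricting to the fibers over $b_{0}$. Let $\pi\colon PB\to B$ be the endpoint fibration $\pi(\gamma)=\gamma(1)$, so $\pi^{-1}(b_{0})=\Omega B$. Define $\hat\eta\colon PB\to E$ by $\hat\eta(\gamma)=\lambda(e_{0},\gamma)(1)$: this is legitimate since $\gamma(0)=b_{0}=p(e_{0})$, it satisfies $p\circ\hat\eta=\pi$ because $p\circ\lambda(e_{0},\gamma)=\gamma$, and it restricts over $b_{0}$ to $\eta$. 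The hypothesis furnishes $g\colon E\to PB$ with $\pi\circ g=p$, restricting over $b_{0}$ to $g|_{F}$. Both $PB$ (it contracts onto the constant path $c_{b_{0}}$) and $E$ (by assumption) are contractible, so $\hat\eta$ and $g$ are homotopy equivalences of total spaces.

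For the first assertion, recall that a homotopy equivalence between the total spaces of two Hurewicz fibrations over a paracompact base, commuting with the projections, is automatically a \emph{fiber} homotopy equivalence (Dold's theorem); the base $B$ here is a manifold, hence paracompact. Applying this to $\hat\eta$ and to $g$ and restricting the fiber homotopy equivalences and fiberwise homotopies to the fibers over $b_{0}$ shows at once that $\eta\colon\Omega B\to F$ and $g|_{F}\colon F\to\Omega B$ are homotopy equivalences.

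For the second assertion it remains to check that $g|_{F}$ is a homotopy \emph{inverse} of $\eta$, and for this it suffices to verify a single composite, $g|_{F}\circ\eta\simeq\mathrm{id}_{\Omega B}$: if $\eta'$ is any homotopy inverse of $\eta$ (one exists by the previous paragraph), then $g|_{F}\simeq g|_{F}\circ(\eta\circ\eta')=(g|_{F}\circ\eta)\circ\eta'\simeq\eta'$, so $g|_{F}$ is a homotopy inverse of $\eta$ as well, and in particular $\eta\circ g|_{F}\simeq\eta\circ\eta'\simeq\mathrm{id}_{F}$. To build the homotopy $g|_{F}\circ\eta\simeq\mathrm{id}_{\Omega B}$, fix $\gamma\in\Omega B$; the path $\lambda(e_{0},\gamma)$ runs in $E$ from $e_{0}$ to $\eta(\gamma)$ and projects to $\gamma$, so applying $g$ along it produces a map of the square,
\[
\Phi_{\gamma}\colon[0,1]^{2}\to B,\qquad \Phi_{\gamma}(s,u)=g\big(\lambda(e_{0},\gamma)(s)\big)(u),
\]
depending continuously on $\gamma$. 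Its $u=0$ edge is constant at $b_{0}$ (values of $g$ are paths based at $b_{0}$); its $u=1$ edge is $s\mapsto p\big(\lambda(e_{0},\gamma)(s)\big)=\gamma(s)$, i.e.\ the loop $\gamma$; its $s=1$ edge is the loop $g|_{F}(\eta(\gamma))$; and its $s=0$ edge is $g(e_{0})$, which is the constant loop $c_{b_{0}}$ (the basepoint normalization implicit in speaking of a map $F\to\Omega B$ with chosen basepoint $e_{0}$; in every application here $B$ is simply connected, so $g(e_{0})$ is null-homotopic rel endpoints in any event). A square in $B$ whose bottom and left edges are constant at $b_{0}$ exhibits its top and right edges as loops at $b_{0}$ that are homotopic rel endpoints, and continuously in the family parameter; hence $g|_{F}(\eta(\gamma))\simeq\gamma$ rel endpoints, naturally in $\gamma$, i.e.\ $g|_{F}\circ\eta\simeq\mathrm{id}_{\Omega B}$.

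The crux is this last step: Dold's theorem gives ``both maps are homotopy equivalences'' with essentially no work, but not that they are mutual inverses, since a fiber homotopy self-equivalence of $E$ can still act nontrivially through the monodromy of $\pi_{1}(B)$ on $\pi_{0}$ of the fiber. One therefore has to produce the homotopy $g|_{F}\circ\eta\simeq\mathrm{id}$ by hand and genuinely use that $g$ is compatible with the basepoint $e_{0}$; without that, $g|_{F}$ would invert $\eta$ only up to a shift by the class $[g(e_{0})]\in\pi_{1}(B)$. Everything else --- well-definedness and continuity of $\hat\eta$, $g$ and $\Phi_{\gamma}$, the contractibility of $PB$, the paracompactness needed for Dold's theorem --- is routine, of a piece with the continuity statements deferred to the appendices. (The first assertion alone may also be read off from the long exact homotopy sequence of $p$: contractibility of $E$ forces the connecting maps $\pi_{n}(B)\to\pi_{n-1}(F)$ to be isomorphisms, matching $\pi_{n-1}(\Omega B)\cong\pi_{n}(B)$; but that argument only yields a weak homotopy equivalence and does not by itself single out $g|_{F}$.)
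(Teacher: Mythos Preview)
Your argument is correct and follows the same overall strategy as the paper: both promote $\eta$ to a fiber-preserving map $\hat\eta=H_{1}\colon PB\to E$ between fibrations over $B$, observe that any map between contractible spaces is a homotopy equivalence, invoke the Dold/May result that a fiber-preserving homotopy equivalence is a fiber homotopy equivalence, and then restrict to the fiber over $b_{0}$.

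The treatments diverge on the second assertion, and here your version is actually the more careful one. The paper simply asserts that because $H_{1}$ and $g$ are fiber-preserving and mutual (free) homotopy inverses, they are \emph{fiber} homotopy inverses of one another; but this does not follow from the theorems cited, and as you implicitly recognize it is false for general path-connected $B$. Indeed, take $E=PB$, $p=\pi$, and $g(\alpha)=\omega_{0}*\alpha$ for a nontrivial loop $\omega_{0}\in\Omega B$: then $g$ satisfies $g(e)(1)=p(e)$, yet $g|_{\Omega B}$ is left-translation by $\omega_{0}$, which shifts $\pi_{0}(\Omega B)\cong\pi_{1}(B)$ and so cannot be a homotopy inverse of $\eta\simeq\mathrm{id}$. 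Your explicit square $\Phi_{\gamma}$ handles exactly this point, and you correctly flag that the argument needs $g(e_{0})$ to be (null-homotopic to) the constant loop --- a normalization satisfied by the trajectory map $\tau$ actually used afterward (the zero-time control has constant trajectory), and in any case vacuous for the simply connected state spaces $\SUN$, $\CPN$, $\FM$ appearing in the applications.

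One small remark: you appeal to paracompactness of $B$ for ``Dold's theorem,'' but the version needed here --- for genuine Hurewicz fibrations --- requires no hypothesis on $B$ (this is the form in May \S7.5 and Hatcher \S4H). Since the lemma is stated for arbitrary path-connected $B$, it is cleaner not to import the manifold assumption at this stage.
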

\begin{proof}
Consider the homotopy $H:PB\times [0,1]\to E$ given by $H_{s}(\gamma) = \lambda(e_{0},\gamma)(s)$, where $e_{0}$ is the basepoint of $E$ and $\lambda$ is the lifting function of $p$.  Let $\rho:PB\to B$ be the endpoint map on the path space $PB$, so that $\rho$ is also a fibration.  Then $H_{1}$ satisfies $p\circ H_{1} = \rho$, so that $H_{1}$ is a fiber-preserving map, and $H_{0}$ is the constant map $H_{0}(\gamma)= b_{0}$ for all $\gamma\in PB$.  Let $g:E\to PB$ be any fiber-preserving map.  Then $g\circ H_{1}\simeq g\circ H_{0}$ which is a constant map on $PB$, and $H_{1}\circ g\simeq H_{0}\circ g$ which is a constant map on $E$.  Since $PB$ and $E$ are both contractible by assumption, $g\circ H_{1}$ and $H_{1}\circ g$ are homotopic to the identity maps on $PB$ and $E$, respectively, so that $H_{1}$ and $g$ are homotopy equivalences.  Moreover, since they are both fiber-preserving, they are fiber homotopy equivalences \cite[\S 4H, Ex. 3]{Hatcher2002}\cite[\S 7.5]{May1999}, and fiber homotopy inverses of one another.  Letting $F$ denote the canonical fiber $F:=p^{-1}(b_{0})$ and noting that the corresponding fiber of $\rho$ is $\Omega B = \rho^{-1}(b_{0})$, this implies that the restrictions of $H_{1}$ to $\Omega B$ and $g$ to $F$ are homotopy equivalences between these two spaces, and homotopy inverses of one another.
\end{proof}

In the case of the endpoint map of the control system $\EP_{b}:\CtlSp\to\MFD$, for each $s\in[0,1]$, let $\rho_{s}:\CtlSp\to\CtlSp$ be the map $\rho_{s}(T,\CTL_{1},\dots,\CTL_{M}) = ((1-s)T, \tilde{\CTL}_{1},\dots, \tilde{\CTL}_{M})$, where $\tilde{\CTL}_{j}(t) = \CTL_{j}(t)$ on $[0,(1-s)T]$ and is zero after $(1-s)T$.  Then $\rho_{s}(C)$ is continuous in $C$ and $s$, $\rho_{0}$ acts as the identity map on $\CtlSp$, $\rho_{s}(0,0,\dots,0) = (0,0,\dots,0)$, and  $\rho_{1}$ is the trivial map sending every control in $\CtlSp$ to the zero time control $(0,0,\dots,0)$.  So $\rho_{s}$ is a deformation retract of $\CtlSp$ to the point $(0,0,\dots,0)$.  Therefore $\CtlSp$ is contractible.  Since $\MFD$ is assumed to be connected, Lemma \ref{lemma:fiberHomotopyEquiv} implies that $\eta:\Omega\MFD\to F:=\EP_{b_{0}}^{-1}(b_{0})$ is a homotopy equivalence, where $\eta(\gamma) = \lambda(C_{0},\gamma)(1)$, $C_{0}$ is the zero time control, and $F\subset\CtlSp$ is the set of controls taking the basepoint $b_{0}\in\MFD$ back to $b_{0}$.  Let $\tau:\CtlSp\to PB$ be the trajectory map, defined implicitly through the control system \eqref{eqn:affineControlSystem} with initial condition $b_{0}$, that integrates the system to obtain the trajectory and then linearly reparametrizes the trajectory to obtain a curve in $PB$.  Then $p\circ\tau = \rho$, so that $\tau$ is fiber-preserving, so that, again by Lemma \ref{lemma:fiberHomotopyEquiv}, $\tau$ restricted to $F$ is a homotopy equivalence between $F$ and $\Omega\MFD$.  In addition, all fibers of $\EP_{b_{0}}$ are homotopy equivalent \cite{Hatcher2002} and the base point $b_{0}$ is arbitrary, so Theorem \ref{thm:homotopyEquiv} is proved.  Furthermore, since $\MFD$ is a second countable, hence separable, manifold, both $\MFD$ and $\Omega\MFD$ have the homotopy type of countable CW-complexes \cite{Milnor1959} and therefore the fibers $\EP_{x}^{-1}(y)$ all have the homotopy type of countable CW-complexes.

Since $\EP_{b}:\CtlSp\to\MFD$ is a Hurewicz fibration with canonical fiber $F:=\EP_{b}^{-1}(b)$, it admits a fibration sequence 
\begin{widetext}
\begin{equation}\cdots \to \Omega^{2}\CtlSp \to\Omega^{2}\MFD\to\Omega F\to\Omega\CtlSp\to\Omega\MFD\stackrel{\eta}{\to} F\stackrel{i}{\to}\CtlSp\stackrel{\EP_{b}}{\to}\MFD\label{eqn:fibrationSequence}\end{equation}
\end{widetext}
where each triple $X\to Y\to Z$ is a fibration $Y\to Z$ with fiber $X$ up to homotopy equivalence \cite{Hatcher2002}.  The map $i:F\to\CtlSp$ above is the inclusion map, and the map $\eta:\Omega\MFD\to F$ is defined as in Lemma \ref{lemma:fiberHomotopyEquiv} \cite{Hatcher2002}.  For any pointed topological space $(Z,z_{0})$, this fibration sequence gives rise to a long exact sequence of spaces of basepoint-preserving homotopy classes of basepoint-preserving maps \cite{Hatcher2002}
\begin{widetext}
\begin{equation}\to\!\langle Z,\Omega^{2}\CtlSp\rangle \!\to\!\langle Z, \Omega^{2}\MFD\rangle\!\to\!\langle Z, \Omega F\rangle\!\to\!\langle Z, \Omega\CtlSp\rangle\!\to\!\langle Z, \Omega\MFD\rangle\!\stackrel{\eta_{*}}{\to}\! \langle Z, F\rangle\!\stackrel{i_{*}}{\to}\!\langle Z, \CtlSp\rangle\!\stackrel{{\EP_{b}}_{*}}{\to}\!\langle Z, \MFD\rangle.\label{eqn:longExactFibrationSequence}\end{equation}
\end{widetext}
The contractibility of $\CtlSp$ implies that $\Omega\CtlSp$ is also contractible, so that $\langle Z,\Omega\CtlSp\rangle = \langle Z, \CtlSp\rangle = 0$ for all $(Z,z_{0})$.  Then the long exact sequence \eqref{eqn:longExactFibrationSequence} contains the short exact sequence \begin{equation}0 \to \langle Z, \Omega\MFD\rangle \stackrel{\eta_{*}}{\longrightarrow} \langle Z, F\rangle \to 0.\label{eqn:shortExactSequence}\end{equation} 

Since control concatenation is continuous and the zero time control acts as the identity  with respect to concatenation (i.e. $0\star C = C\star 0 = C$), $F$ is an H-space \cite{Hatcher2002}.  This structure on $F$ induces a natural monoid structure on $\langle Z, F\rangle$.  In addition, the natural H-group structure on $\Omega\MFD$ induces a group structure on $\langle Z, \Omega\MFD \rangle$.  Since, as we noted above, the lifting function respects path concatenation, so too will $\eta$, i.e. $\eta(\omega_{1}\star\omega_{2}) = \eta(\omega_{1})\star\eta(\omega_{2})$ for two loops $\omega_{1},\omega_{2}\in\Omega\MFD$, and therefore $\eta_{*}$ will respect the natural products on $\langle Z, \Omega\MFD\rangle$ and $\langle Z, F\rangle$.  Also note that the trivial loop (constant loop at $\mathbb{I}$) in $\Omega\MFD$ maps to the zero time control in $F$.  Then the fact that $\eta_{*} $ is a bijection (due to the exactness of \eqref{eqn:shortExactSequence}) means that the induced monoid structure on $\langle Z, F\rangle$ is, in fact, a group structure and $\eta_{*}$ is a group isomorphism.  

An obvious consequence is that all of the standard homotopy groups are isomorphic for these spaces.  Indeed, $\pi_{i}(\EP_{x}^{-1}(y)) := \langle S^{i}, \EP_{x}^{-1}(y)\rangle \simeq \langle S^{i}, \Omega\MFD\rangle =: \pi_{i}(\Omega\MFD) \simeq \pi_{i+1}(\MFD)$ for each $i\geq 0$.  In addition to this homotopy structure, the homotopy equivalence between $\EP_{x}^{-1}(y)$ and $\Omega\MFD$ implies that these two spaces share isomorphic homology groups [$H_{n}(\EP_{x}^{-1}(y); G)\simeq H_{n}(\Omega\MFD; G)$] and cohomology groups [$H^{n}(\EP_{x}^{-1}(y); G)\simeq H^{n}(\Omega\MFD; G)$] for all coefficient groups $G$ \cite[Prop. 4.21]{Hatcher2002}.  We will take a closer look at these topological indices for the common cases of quantum control in the next section.

\bigskip

\section{Fiber Topologies in Quantum Control}
\label{sec:quantumFiberTop}
We now turn our attention back to quantum control and consider the detailed topologies of the fibers in the most common formulations, where either the manifold $\MFD$ is the special unitary group $\SUN$ representing the space of unitary time propagation operators that are the general solutions of the Schr\"odinger equation, or the manifold $\MFD$ is the complex projective space $\CPN$ of states (modulo phase) of an $N$-level quantum system, or the manifold $\MFD$ is a generalized complex flag manifold $\FM$ of unitarily accessible density matrices.  As shown in Theorem \ref{thm:homotopyEquiv}, this reduces to understanding the topology of the loopspaces $\Omega\SUN$, $\Omega\CPN$, and $\Omega \FM$.

\subsection{Special Unitary Group}
First consider the case where the control problem \eqref{eqn:affineControlSystem} is a right-invariant affine system on $\SUN$, i.e. the Schr\"odinger equation 
\begin{equation}
	i\hbar \frac{d}{dt}U(t) = H(t)U(t) \qquad U(0) = \mathbb{I} \qquad H(t)  = H_{0} + \sum_{i=1}^{m}\CTL_{i}(t)H_{i}\label{eqn:controlHamiltonian}
\end{equation}
where the $H_{i}$'s are Hermitian and traceless and the control terms $\{iH_{1},\dots, iH_{m}\}$ generate the Lie algebra $\suN$.  Theorem \ref{thm:homotopyEquiv} states that for any target unitary propagator $W\in\SUN$, the set of controls in $\CtlSp$ that take the identity $\mathbb{I}$ to $W$, $\EP_{\mathbb{I}}^{-1}(W)$, is homotopy equivalent to $\Omega\SUN$.  Now, the homotopy groups of $\SUN$ can be shown by Bott periodicity \cite{Bott1959,Milnor1973} to be $\pi_{i}(\MFD) = 0$ for $0\leq i\leq 2N-1$ and $i$ even, $\pi_{1}(\MFD) = 0$, and $\pi_{i}(\MFD) \simeq \mathbb{Z}$ for $2\leq i\leq 2N-1$ and $i$ odd.  Therefore, \begin{equation}\pi_{i}(\EP_{\mathbb{I}}^{-1}(W)) \simeq \begin{cases}0 & i = 0\\0 & 1\leq i\leq 2N-3 \text{ and } i \text{ odd}\\ \mathbb{Z} & 2\leq i\leq 2N-2 \text{ and } i \text{ even.}\end{cases}\end{equation}  As an immediate consequence, $\EP_{\mathbb{I}}^{-1}(W)$ is both connected and simply connected for all $N\geq 2$.

The integral cohomology of $\Omega\SUN$ was also investigated by Bott in the 1950's \cite{Bott1956, Bott1958a, Milnor1973} and shown to be $H^{*}(\Omega\SUN; \mathbb{Z}) \simeq \mathbb{Z}[x_{2},x_{4},\dots,x_{2N-2}]$, the polynomial ring over $\mathbb{Z}$, where the $x_{k}$ are generators of degree $k$.  Importantly, this implies that the cohomology of $\Omega\SUN$ is trivial at odd dimensions.

\bigskip

\subsection{Complex Projective Space}
The space of pure quantum states of an $N$-level system may be modeled as the complex projective space $\CPN$, described by the fiber bundle $S^{1}\to S^{2N-1}\to\CPN$, which gives rise to the long exact sequence
\begin{widetext}
\begin{equation}\cdots\to\pi_{2}(S^{1})\to\pi_{2}(S^{2N-1})\to\pi_{2}(\CPN)\to\pi_{1}(S^{1})\to\pi_{1}(S^{2N-1})\to\pi_{1}(\CPN)\to 0.\end{equation}
\end{widetext}
Since $\pi_{i}(S^{1}) = 0$ for all $i\neq 0$, this implies that $\pi_{i}(\CPN) \simeq \pi_{i}(S^{2N-1})$ for all $i>2$.  Also, since $\pi_{i}(S^{2N-1}) = 0$ for all $i<2N-1$, we have that $\pi_{1}(\CPN) = 0$ and $\pi_{2}(\CPN) \simeq \mathbb{Z}$.  Therefore we find that for any two states $\psi,\phi\in\CPN$, the first $2N-2$ homotopy groups of $\EP_{\psi}^{-1}(\phi)$ are \begin{equation}\pi_{i}(\EP_{\psi}^{-1}(\phi))\simeq\begin{cases}0 & i = 0\\ \mathbb{Z} & i = 1 \\ 0 & 2\leq i \leq 2N-3\\\mathbb{Z} & i = 2N-2.\end{cases}\label{eqn:CPNfiberHomotopy}\end{equation}

The integral cohomology of $\Omega\CPN$ was also investigated in the 1950's \cite{Bott1958, Halpern1958} and shown to be $H^{*}(\Omega\CPN; \mathbb{Z})\simeq \bigwedge[x_{1}]\otimes\mathbb{Z}[y_{1},y_{2},y_{3},\dots]/\{i!j!y_{i}y_{j} = (i+j)!y_{i+j}\}$ where $x_{1}$ is a generator of degree 1 and $y_{j}$ is a generator of degree $j*(2n-2)$ for each $j = 1, 2, \dots$.

\bigskip

\subsection{Generalized Complex Flag Manifolds}
The closed-system evolution of a density matrix $\rho$ (a positive, trace one, Hermitian matrix), can be described by the von Neumann equation \begin{equation}i\hbar\frac{d}{dt}\rho(t) = [H(t),\rho(t)] \qquad \qquad \rho(0) = \rho_{0} \qquad H(t)  = H_{0} + \sum_{i=1}^{m}\CTL_{i}(t)H_{i}\end{equation} or, equivalently, $\rho(t) = U(t)\rho_{0}U^{\dag}(t)$, where $U(t)\in\SUN$ is the unitary propagator described in \eqref{eqn:controlHamiltonian}.  Given some fixed initial condition $\rho_{0}$, the space of density matrices reachable by closed-system dynamics [assuming $\{iH_{j}\}$ generates $\suN$] is the orbit $\{U\rho_{0}U^{\dag}\;:\; U\in\SUN\}$, which can be regarded as the generalized complex flag manifold $\FM \simeq \SUN/\mathrm{S}\big(\mathrm{U}(n_{1})\oplus\dots\oplus \mathrm{U}(n_{\kappa})\big)$, where the $n_{j}$'s are the multiplicities of the eigenvalues of $\rho_{0}$.  In the case where $\rho_{0}$ is a pure state (i.e., $\rho_{0} = |\psi_{0}\rangle\langle\psi_{0}|$), this is just the complex projective space, i.e. $\FM[1,N-1]\simeq \CPN$.  At the other extreme, if $\rho_{0}$ is completely non-degenerate (representing a Boltzmann distribution over non-degenerate energy levels, for example), then the state space is the \emph{complete} flag manifold $\FM[1,1,\dots,1]$.

These flag manifolds are compact, algebraic, homogeneous K\"ahler manifolds \cite{Borel1954}, simultaneously possessing natural structures of complex analytic manifolds with Hermitian metrics, Riemannian manifolds, symplectic manifolds, and algebraic varieties.  Topologically, they share the properties of being connected and simply connected, i.e. $\pi_{0}\big(\FM\big) \simeq \pi_{1}\big(\FM\big) \simeq 0$.  Also, from the homotopy long exact sequence of the fibration \begin{equation}\mathrm{S}\big(\mathrm{U}(n_{1})\oplus\dots\oplus \mathrm{U}(n_{\kappa})\big) \to \SUN \to \FM,\end{equation} it may be seen that $\pi_{2}\big(\FM\big) \simeq \pi_{2}\big(\mathrm{S}\big(\mathrm{U}(n_{1})\oplus\dots\oplus \mathrm{U}(n_{\kappa})\big)\simeq \mathbb{Z}^{\kappa -1}.$  Therefore the loop spaces are all connected, $\pi_{0}\big(\Omega\FM\big) \simeq \pi_{1}\big(\FM\big) \simeq 0$, and have nontrivial fundamental group: $\pi_{1}\big(\Omega\FM\big) \simeq\pi_{2}\big(\FM\big) \simeq \mathbb{Z}^{\kappa-1}$.  Higher order homotopy information about these spaces will depend on the choices of the parameters $(n_{1},\dots,n_{\kappa})$, i.e. on the eigenstructure of $\rho_{0}$.

The integral homology ring of the loop space of the complete flag manifold was just recently shown to be 
\begin{widetext}
\begin{align}
	\lefteqn{H_{*}\big(\Omega\FM[1,1,\dots,1]; \mathbb{Z}\big) \simeq H_{*}\big(\Omega\big(\SUN/T^{N-1}\big) ; \mathbb{Z}\big)}&\nonumber\\
	& \quad \simeq \big(T(x_{1},\dots,x_{N-1})\otimes\mathbb{Z}[y_{1},\dots,y_{N-1}]\big)/\langle x_{k}^{2} = x_{p}x_{q} = 2y_{1} \text{ for } 1\leq k,p,q\leq N-1, \; p\neq q\rangle,
\end{align}
\end{widetext}
where the generators $x_{j}$ are of degree $1$, and the generators $y_{j}$ are of degree $2j$, and $T(x_{1},\dots,x_{N-1})$ is the tensor algebra generated by $x_{1},\dots,x_{N-1}$ \cite{Grbic2010}.  The \emph{integral} (co)homologies of the loop spaces of the other generalized complex flag manifolds appear not to have been published, but both the  rational and mod $p$ cohomologies have been investigated \cite{Smith1968}.

\bigskip

\section{Topology of Landscape Level Sets and Critical Sets in Quantum Control}
\label{sec:dynamicalSetTopology}
In the study of quantum control landscapes, we are often interested in ``dynamical'' subsets $\EP_{x}^{-1}(\KSS)\subset \CtlSp$ corresponding to ``kinematic'' subsets $\KSS\subset\MFD$.  Most typically, these kinematic subsets are level sets and critical sets of objective functions defined on $\MFD$.  In this section, we consider the tools that may be used to investigate the topology of these types of dynamical subsets and analyze a few examples from quantum control.  In particular, we will examine the topologies of level sets and critical sets of two classes of quantum control landscapes.

The first of these families of landscapes consists of the unitary quantum control landscapes $J(U) = |\Tr(AA^{\dag}W^{\dag}U)|^{2}$ where $A$ is any complex $N\times N$ matrix and $W$ is a target unitary transformation (i.e., quantum gate).  These landscapes generalize the standard fidelity measure used in quantum information applications.  The critical point set of such a landscape in the case where $A$ is non-singular can be shown (aside from the $J=0$ minimum set) to comprise disjoint manifolds, each of which is a direct product of complex Grassmannians \cite{Ho2009,Dominy2011b}.  If $AA^{\dag}$ has eigenvalues with multiplicities $n_{1},\dots,n_{\kappa}$, then these critical submanifolds are of the form $\Gr{\nu_{1}}{n_{1}}\oplus \cdots\oplus \Gr{\nu_{\kappa}}{n_{\kappa}}$ for some set of parameters $0\leq \nu_{j}\leq n_{j}$.  It should be noted that for any such critical submanifold $C\subset\SUN$ and any root of unity $e^{2\pi i k/N}$, the set $e^{2\pi i k/N}C$ is another critical submanifold, disjoint from $C$, with the same critical value as $C$.

The second landscape family can either be described on the special unitary group by $J(U) = \Tr(U\rho_{0}U^{\dag}\mathcal{O})$ or on the unitary adjoint orbit through $\rho_{0}$ [i.e. the flag manifold $\FM$] by $J'(\rho) = \Tr(\rho\mathcal{O})$.  In either case, $\mathcal{O}$ is a Hermitian matrix describing a particular quantum mechanical observable, and the landscape represents the value that would be measured on a quantum system with state $U\rho_{0}U^{\dag}$ (or $\rho$, respectively).  It may be shown that the natural action of $\SUN$ on $\FM$ is a Hamiltonian action and that $J'$ is the associated ``Hamiltonian function'' that generates this action and defines the moment map.  As such, $J'$ is a Morse-Bott function such that each level set is connected, and therefore $J'$ has exactly one local minimum submanifold and one local maximum submanifold, i.e. $J'$ has no ``traps'' \cite{Atiyah1982, Guillemin1982}\cite[\S 3.3--3.4]{Nicolaescu2007}\cite[Ex. 3.24]{Banyaga2004}.  Furthermore, the critical submanifolds of $J'$ are diffeomorphic to direct products of generalized complex flag manifolds: $C' = \FM[k_{11},\dots, k_{1s}]\oplus\cdots\oplus\FM[k_{\kappa 1},\dots,k_{\kappa s}]$ where $\sum_{j=1}^{s}k_{ij} = n_{i}$ for each $i$.  As is the case with the adjoint orbit itself, these critical submanifolds are homogeneous K\"ahler manifolds \cite{Borel1954}.  When formulated on the unitary group $\UN$, $J$ was shown to have critical submanifolds diffeomorphic to $\big(\Unm\big)/G_{\pi}$, where $G_{\pi}$ is a subgroup of the form $\{V\oplus\pi V\pi^{\dag}\;:\; V\in\Un\text{ and } \pi V\pi^{\dag}\in\Um\}$ for some permutation matrix $\pi$, $\Un = \mathrm{U}(n_{1})\oplus\cdots\oplus\mathrm{U}(n_{\kappa})$ and similarly for $\Um$, and where $\mathbf{n} = [n_{1},\dots,n_{\kappa}]^{\rmT}$ and $\mathbf{m} = [m_{1},\dots,m_{r}]^{\rmT}$ are the multiplicities of the eigenvalues of $\rho$ and $\mathcal{O}$, respectively \cite{Wu2008}.  This is the fiber bundle defined by the restriction of $f:\UN\to\FM$ to $f^{-1}(C')$ with base $C'$ a critical submanifold of $J'$ and fiber $\mathrm{U}(n_{1})\oplus\dots\oplus\mathrm{U}(n_{\kappa})$.  Likewise the critical set of $J$ over $\SUN$ comprises submanifolds diffeomorphic to the fiber bundles given by the restriction of $g:\SUN\to\FM$ to $g^{-1}(C')$ with base $C'$ and fiber $\mathrm{S}\big(\mathrm{U}(n_{1})\oplus\dots\oplus\mathrm{U}(n_{\kappa})\big)$.

\bigskip

\subsection{Homotopy}
For any subset $\KSS$ of $\MFD$, $\EP_{x}$ restricted to $\EP_{x}^{-1}(\KSS)$ is a Hurewicz fibration with the same fiber as $\EP_{x}$ (up to homotopy equivalence)\cite{Hatcher2002}.  So assuming that $\KSS$ is connected, one way to examine the topology of $\EP_{x}^{-1}(\KSS)$ is through the long exact sequence
\begin{widetext}
\begin{equation}\cdots \!\to\! \pi_{2}(\EP_{x}^{-1}(\KSS)) \!\to\! \pi_{2}(\KSS) \!\to\! \pi_{1}(F) \!\to\! \pi_{1}(\EP_{x}^{-1}(\KSS)) \!\to\! \pi_{1}(\KSS) \!\to\! \pi_{0}(F) \!\to\! \pi_{0}(\EP_{x}^{-1}(\KSS)) \!\to\!  0,\label{eqn:subFibrationLongExactSequence}\end{equation}
\end{widetext}
which, because the fiber $F = \EP_{x}^{-1}(x)$ (the set of controls steering from $x$ back to $x$) is homotopy equivalent to $\Omega \MFD$ by Theorem \ref{thm:homotopyEquiv}, may be rewritten
\begin{widetext}
\begin{equation}\cdots \!\to\! \pi_{2}(\EP_{x}^{-1}(\KSS)) \!\to\! \pi_{2}(\KSS) \!\to\! \pi_{2}(\MFD) \!\to\! \pi_{1}(\EP_{x}^{-1}(\KSS)) \!\to\! \pi_{1}(\KSS) \!\to\! \pi_{1}(\MFD) \!\to\! \pi_{0}(\EP_{x}^{-1}(\KSS)) \!\to\!  0.\label{eqn:subFibrationLongExactSequence2}\end{equation}
\end{widetext}
This long exact sequence may be used to obtain higher order homotopy information about $\EP_{x}^{-1}(\KSS)$.  For example, when  the fiber $F$ is connected and simply connected (as is the case when $\MFD = \SUN$), the fundamental group $\pi_{1}(\EP_{x}^{-1}(\KSS))$ is isomorphic to $\pi_{1}(\KSS)$.  On the other hand, when $\MFD = \CPN$, the many trivial homotopy groups of the fiber indicated in \eqref{eqn:CPNfiberHomotopy} imply that $\pi_{i}(\EP_{x}^{-1}(\KSS)) \simeq \pi_{i}(\KSS)$ for all $i = 3,\dots,2N-3$.  Much information about the structure of these dynamical sets $\EP_{x}^{-1}(\KSS)$ can be obtained through a careful analysis of this long exact sequence and utilizing other tools of homotopy theory, including perhaps spectral sequences.  The implications for connectedness of these sets will be considered presently, however an exhaustive analysis of the higher homotopy structure is outside the scope of this paper.

\bigskip

\subsection{Connectedness}
An immediate consequence of the homotopy long exact sequence \eqref{eqn:subFibrationLongExactSequence} is that if $\KSS$ is connected and $\pi_{1}(\KSS)\to\pi_{0}(F)$ is surjective, then $\EP_{x}^{-1}(\KSS)$ is connected.  This map is defined in a similar fashion to $\eta$ from \eqref{eqn:fibrationSequence}: given an element of $\pi_{1}(\KSS)$, choose a representative loop in $\KSS$, lift it to $\EP_{x}^{-1}(\KSS)$ by the lifting function, and select the connected component of the endpoint of the resulting curve to obtain the corresponding element of $\pi_{0}(F)$.  Since $\pi_{0}(F) \simeq \pi_{0}(\Omega \MFD) \simeq \pi_{1}(\MFD)$, surjectivity of the map $\pi_{1}(\KSS)\to\pi_{0}(F)$ is equivalent to the statement that every element of the fundamental group of $\MFD$ has a representative loop in $\KSS$, or that $i_{*}:\pi_{1}(\KSS)\to\pi_{1}(\MFD)$ is surjective, where $i:\KSS\to\MFD$ is the inclusion map.  More generally, if $\pi_{1}(\KSS_{c})\to\pi_{0}(F)$ is surjective for each connected component of $\KSS_{c}\subset \KSS$, then $\EP_{x}^{-1}(\KSS)$ has exactly the same number of connected components as $\KSS$.  In particular, if $F$ is connected then this is satisfied trivially.  Since the fibers of the fibrations discussed in the last section are all connected, this statement is especially relevant to quantum control.

Since each level set of $J'(\rho) = \Tr(\rho \mathcal{O})$ is connected in $\FM$, and $\FM$ is simply connected, the above arguments imply that the corresponding ``dynamical'' level sets in $\CtlSp$ are also connected.  Of course, this conclusion is independent of the choice of state space.  In other words, had the control problem been defined on $\MFD = \SUN$ with the kinematic landscape $J(U) = \Tr(U\rho_{0}U^{\dag}\mathcal{O})$, the dynamical level sets are identical to those of the $\FM$ construction, and therefore are also connected.

With the unitary landscapes $J(U) = |\Tr(AA^{\dag}W^{\dag}U)|^{2}$ on $\SUN$ where $A\in\mathbb{C}^{N\times N}$ is an arbitrary parameter matrix \cite{Ho2009, Dominy2011b}, the connectedness story is different.  When $A$ is nonsingular, the maximal level set of $J$ comprises exactly $N$ discrete points, one for each global phase rotation of $W$ by an $N$'th root of unity, i.e. $\KSS = \big\{e^{\frac{2\pi i k}{N}}W\;:\;k = 0,\dots,N-1\big\}\subset \SUN$.  As a result, every level set from that maximal value down to the next critical value consists of exactly $N$ connected components (homeomorphic to spheres).  And therefore, each of the corresponding ``dynamical'' level sets comprises $N$ connected components.  This result is perhaps better understood in terms of control on the projective unitary group $\PUN$.  Since the global phase of the unitary evolution operator (or indeed the quantum state) is not physically measurable, the space of evolution operators should really be considered to be $\UN/\mathrm{U}(1)\mathbb{I}\cong\PUN$.  This is a Lie group with fundamental group $\pi_{1}\big(\PUN\big)\cong\mathbb{Z}/N\mathbb{Z}$, i.e. the additive group of integers mod $N$.  Since the landscape $J$ is invariant to global phase, it is well-defined on $\PUN$ and the maximal level set of controls is the same as when the control problem is defined on $\SUN$.  But the interpretation now is that, because $\pi_{1}(\PUN)\cong\mathbb{Z}/N\mathbb{Z}$, for any given target $W$ there are exactly $N$ homotopy classes of paths joining the identity $\mathbb{I}$ to the target $W$, and each class is uniquely identified with one connected component of the maximal level set of $J$ in $\CtlSp$.  This then leads to the possibility of a physical interpretation of control set topology because it may be reasonable to identify these $N$ homotopy classes of paths with $N$ classes of physical mechanisms of control for arriving at the target operator $W$.  Then each connected component of the maximal level set in $\CtlSp$ would uniquely correspond to one of these $N$ mechanism classes.

\bigskip

\subsection{(Co)Homology}
Since $\EP_{x}$ restricted to $\EP_{x}^{-1}(\KSS)$ is a Hurewicz fibration with the same fiber as $\EP_{x}$, the most promising method for evaluating the (co)homology of the dynamical set $\EP_{x}^{-1}(\KSS)$ is the Serre spectral sequence.  The Serre spectral sequence applies to Hurewicz fibrations and attempts to compute the (co)homology of the total space [in this case $\EP_{x}^{-1}(\KSS)$] given the (co)homologies of the fiber $F\simeq \Omega\MFD$ and the base $\KSS$.  The sequence appears in various forms, the simplest of which is dependent on the satisfaction of one additional assumption, namely that the action of the fundamental group of the base $\KSS$ on the (co)homology of the fiber is trivial \cite{HatcherSSAT, McCleary2001}.  Since the state spaces we have considered -- $\SUN$, $\CPN$, and $\FM$ -- are all simply connected, every loop in these spaces acts trivially on the (co)homology of the fiber.  For any subset $\KSS$ of one of these spaces, a loop in $\KSS$ is a loop in the state space and therefore acts trivially on the (co)homology of the fiber.  So, even though $\KSS$ may not be simply connected, its fundamental group acts trivially on the (co)homology of the fiber simply by virtue of $\KSS$ being a subset of a simply connected space, and the fibration being defined by restriction to $\EP_{x}^{-1}(\KSS)$.  As a result, the Serre spectral sequence beginning with $E_{p,q}^{2} \cong H_{p}(\KSS; H_{q}(F; G))$ [or $E_{p,q}^{2}\cong H^{p}(\KSS; H^{q}(F;G))$] converges to $H_{*}\big(\EP_{x}^{-1}(\KSS); G\big)$ [$H^{*}\big(\EP_{x}^{-1}(\KSS); G\big)$, respectively] for every abelian group $G$ of coefficients \cite[Thm. 5.4]{McCleary2001}\cite{HatcherSSAT}.

While this gives the starting point of the spectral sequence (assuming the (co)homology of $\KSS$ is known), the effort involved in working out the limit of the sequence can be substantial.  For this reason, a full accounting of the (co)homologies of, say, the dynamical critical point sets of $\Tr(\rho\mathcal{O})$, is beyond the scope of this paper.  However, in certain special cases, the Serre spectral sequence converges immediately at the $E^{2}$ step.  In particular, this occurs when the (co)homologies of both the fiber $F$ and the base $\KSS$ are torsion free and zero for odd dimensions.  Under these conditions, the integral homology of the total space $\EP_{x}^{-1}(\KSS)$ is simply given by $H_{*}\big(\EP_{x}^{-1}(\KSS); \mathbb{Z}\big) \cong H_{*}(\KSS; \mathbb{Z})\otimes H_{*}(F; \mathbb{Z})$, and similarly for the integral cohomology \cite{HatcherSSAT}.  

This enables a vast simplification of the problem of computing the integral cohomology of the dynamical critical sets of $|\Tr(AA^{\dag}W^{\dag}U)|$.  The integral cohomologies of the complex Grassmannian manifolds $\Gr{\nu}{n}$ and of the loop space of the special unitary group $\Omega \SUN$ are free of torsion and zero for odd dimension \cite{Bott1956} as desired for the simplification indicated above.  Because a critical submanifold $C$ of $|\Tr(AA^{\dag}W^{\dag}U)|$ in $\SUN$ is a direct product of Grassmannians, the K\"unneth formula may be invoked to conclude $H^{*}(C;\mathbb{Z})\cong H^{*}\big(\Gr{\nu_{1}}{n_{1}}\big)\otimes\cdots\otimes H^{*}\big(\Gr{\nu_{\kappa}}{n_{\kappa}}\big)$.  It follows that the integral cohomology of a dynamical critical set corresponding to the preimage of a critical submanifold $C$ of $|\Tr(AA^{\dag}W^{\dag}U)|$ in $\SUN$ is torsion-free, zero for odd dimensions, and is given by 
\begin{widetext}
\begin{equation}H^{*}\big(\EP_{\mathbb{I}}^{-1}(C);\mathbb{Z}\big) \cong H^{*}\big(\Gr{\nu_{1}}{n_{1}}\big)\otimes\cdots\otimes H^{*}\big(\Gr{\nu_{\kappa}}{n_{\kappa}}\big)\otimes \mathbb{Z}[x_{2},\dots,x_{2N-2}],\end{equation}
\end{widetext}
where the $x_{2j}$ are free generators of degree $2j$, and the cohomologies of the individual Grassmannian manifolds are given by the Poincare polynomial \cite{Bott1982} \begin{widetext}
\begin{equation}P_{t}[\Gr{\nu}{n}] = \frac{(1-t^{2})\cdots(1-t^{2n})}{(1-t^{2})\cdots(1-t^{2\nu})(1-t^{2})\cdots(1-t^{2(n-\nu)})}.\end{equation}
\end{widetext}

\bigskip

\section{Conclusions}
\label{sec:conclusion}
A proof of a generalization of Sarychev's theorem was presented, demonstrating that the space of controls driving the dynamics of an affine control system from a given starting point to a given target point is homotopy equivalent to the loop space of the state manifold of the system, provided the non-drift part of the control system satisfies the Lie algebra rank condition.  Moreover, the proof shows that the endpoint map $\EP_{x}$ is a Hurewicz fibration.  These results enable the analysis of the topological structure of subsets of control space of the form $\EP_{x}^{-1}(\KSS)\subset\CtlSp$.  Both the homotopy and (co)homology structures of critical sets and level sets may be examined using tools of algebraic topology such as exact sequences and spectral sequences, as demonstrated in the simplest cases.  It should be stressed that these results depend heavily on the control space being considered.  In particular, the proof does not apply in the case of scalar control and/or a fixed final time $T$, which have often been the conditions in prior quantum control landscape analyses.  However, in the case of a two-level quantum system, Montgomery \cite{Montgomery1996} has essentially shown how Smale's work on regular homotopy \cite{Smale1958} can be applied to a scalar control problem on $\mathrm{SO}(3)\simeq \mathrm{PU}(2)$, leading to a similar conclusion to Theorem \ref{thm:homotopyEquiv} (see Appendix \ref{sec:dynHomQubits}).  Similar analysis for scalar control problems on higher-dimensional quantum systems may be more difficult, possibly due in part to the presence of singular controls \cite{Wu2012}.

The most straight-forward topological information that can be obtained is the connectedness of these sets.  It was shown that, since the state spaces of greatest interest in closed system quantum control are generally simply connected, a ``dynamical set'' of the form $\EP_{x}^{-1}(\KSS)$ has exactly one connected component for each connected component of the ``kinematic set'' $\KSS$.  This implies, for example, that the dynamical level sets of the ensemble landscape $\Tr(\rho\mathcal{O})$ corresponding to a quantum mechanical observable $\mathcal{O}$ are all connected and that the highest dynamical level sets (from the maximum value down to the next highest critical value) each comprise exactly $N$ connected components for an $N$-level system.  This structure of the dynamical level sets and critical sets can be important for understanding the behavior of optimal control algorithms employed both in the laboratory and in numerical simulations, especially continuous level set and critical set exploration algorithms such as D-MORPH \cite{Rothman2005, Rothman2005a, Rothman2006, Beltrani2007, Dominy2008}.  In some cases, for example the case of the set of all controls producing a target unitary evolution operator up to global phase, the connected components of the control set may also have a physical interpretation in terms of control mechanism or other characteristics.  In addition, this connectedness story as well as the wealth of additional homotopy and (co)homology content of the theorem offer a tantalizing glimpse of a deeper structure of dynamical quantum control landscapes and more generally of quantum optimal control theory, the full implications of which remain to be understood.

\bigskip

\begin{acknowledgments}
This work was supported, in part, by U.S. Department of Energy (DOE) Contract No. DE-AC02-76-CHO-3073 through the Program in Plasma Science and Technology at Princeton.  We also acknowledge support from the National Science Foundation (NSF) grant No. CHE-0718610 and from the U.S. Army Research Office (ARO) grant No. W911NF-09-1-0482.
\end{acknowledgments}

\bigskip

\appendix
\section{Continuity of the Endpoint Map\label{sec:contEndpoint}}
In this appendix, we prove the claim that the endpoint map $\EP_{x}:\CtlSp\to\MFD$ is continuous.  From \cite[Theorem 10, Chapter 4]{Jurdjevic1997} comes the result that for an affine control system on a Riemannian manifold, convergence of controls in $L^{1}_{m}[0,T]$ (for a fixed final time $T$) implies uniform converges of the trajectories and therefore convergence of the endpoints.  Suppose that, in $\CtlSp$, $C_{k} = (T_{k},\CTL_{1}^{k},\dots,\CTL_{m}^{k})\to (T,\CTL_{1},\dots,\CTL_{M}) = C$.  Then $T_{k}\to T$, and in $L^{1}[0,\hat{T}]$ for $\hat{T} = \sup\{T_{k}\}<\infty$, we have that $\CTL_{j}^{k}\to\CTL_{j}$ for each $j$.  Then  the trajectories converge uniformly $z_{k}(\cdot)\to z(\cdot)$ on $[0,\hat{T}]$.  Each $z_{k}$ is the controlled trajectory over $[0,T_{k}]$ followed by pure drift (corresponding to zero control) on $[T_{k},\hat{T}]$.  Then $\sup_{[0,\hat{T}]}d(z_{k}(t),z(t))\to 0$ as $k\to \infty$, where $d(\cdot,\cdot)$ denotes the topological metric on $\MFD$ induced the Riemannian metric.  Now,
\begin{widetext}
\begin{subequations}
\begin{align}
	\sup_{\tau\in[0,1]}d(z_{k}(\tau T_{k}), z(\tau T)) &\leq \sup_{\tau\in[0,1]}d(z_{k}(\tau T_{k}), z(\tau T_{k})) + \sup_{\tau\in[0,1]}d(z(\tau T_{k}), z(\tau T))\\
	& \leq \sup_{t\in[0,\hat{T}]}d(z_{k}(t), z(t)) + \sup_{\tau\in[0,1]}d(z(\tau T_{k}), z(\tau T))\label{eqn:unifReparamBound}.
\end{align}
\end{subequations}
\end{widetext}
The first term of \eqref{eqn:unifReparamBound} vanishes as $k\to\infty$ due to uniform convergence of $z_{k}$ to $z$ on $[0,\hat{T}]$.  Since $z$ is continuous on $[0,\hat{T}]$, by the Heine(-Cantor) theorem it is uniformly continuous, so that the second term also vanishes as $k\to\infty$ and $T_{k}\to T$.  As a result, the trajectories $z_{k}$ on $[0,T_{k}]$, linearly reparametrized to $[0,1]$, converge uniformly to $z$, similarly reparametrized.  This means first that the map from $\CtlSp$ to this linearly reparametrized trajectory space with uniform topology is a continuous map, so the metric topology we have defined on $\CtlSp$ is strictly stronger than this uniform topology on trajectory space.  And second, that the endpoint map $\EP_{x}$ is continuous.

\bigskip

\section{Continuity of Control Concatenation\label{sec:contControlConcat}}
We defined concatenation of controls in \eqref{eqn:controlConcat} with notation $C\star D$ for $C,D\in\CtlSp$.  In this section we prove that this operation $\star:\CtlSp\times\CtlSp\to\CtlSp$ is a continuous map.  Since $\CtlSp$ is a metric space, $\CtlSp\times\CtlSp$ is also metric, so it suffices to consider the convergence of $C_{k}\star D_{k}$ when $(C_{k},D_{k})\to(C,D)$ and therefore $C_{k}\to C$ and $D_{k}\to D$.  We will write the components of these controls as $C = (T^{C},\CTL_{1},\dots, \CTL_{M})$, $C_{k} = (T_{k}^{C}, \CTL_{1}^{k},\dots, \CTL_{M}^{k})$, $D = (T^{D},\CTLB_{1},\dots,\CTLB_{M})$ and $D_{k} = (T_{k}^{D},\CTLB_{1}^{k},\dots,\CTLB_{M}^{k})$.  By the triangle inequality, $d(C_{k}\star D_{k},C\star D) \leq d(C_{k}\star D_{k}, C_{k}\star D) + d(C_{k}\star D, C\star D)$.  It may easily be observed from the definition of $\star$ and of the metric $d$ on $\CtlSp$, that $d(C_{k}\star D_{k}, C_{k}\star D) = d(D_{k},D)$.  Now 
\begin{widetext}
\begin{subequations}
\begin{align}
	d(C_{k}\star D, C\star D) & = \begin{cases}|T_{k}^{C} - T^{C}| + \sum_{j=1}^{M}\int_{0}^{T^{C}}|\CTL_{j}^{k}(t) - \CTL_{j}(t)|\,\rmd t & \\
	 \qquad + \int_{T^{C}}^{T_{k}^{C}}|\CTL_{j}^{k}(t) - \CTLB_{j}(t-T^{C})|\,\rmd t & \\
	 \qquad + \int_{T_{k}^{C}}^{\infty}|\CTLB_{j}(t-T_{k}^{C}) - \CTLB_{j}(t-T^{C})|\,\rmd t & T^{C} \leq T_{k}^{C}\\
	|T_{k}^{C} - T^{C}| + \sum_{j=1}^{M}\int_{0}^{T_{k}^{C}}|\CTL_{j}^{k}(t) - \CTL_{j}(t)|\,\rmd t &\\
	 \qquad + \int_{T_{k}^{C}}^{T^{C}}|\CTL_{j}(t) - \CTLB_{j}(t-T_{k}^{C})|\,\rmd t & \\
	 \qquad + \int_{T^{C}}^{\infty}|\CTLB_{j}(t-T_{k}^{C}) - \CTLB_{j}(t-T^{C})|\,\rmd t & T_{k}^{C} \leq T^{C}
	 \end{cases}\\
	 & \leq d(C_{k},C) + \sum_{j=1}^{M}\int_{0}^{|T_{k}^{C} - T^{C}|}|\CTLB_{j}(t)|\,\rmd t\nonumber\\
	 & \qquad + \int_{0}^{T^{D}}|\CTLB_{j}(t + |T_{k}^{C}-T^{C}|) - \CTLB_{j}(t)|\,\rmd t.
\label{eqn:concatBound}\end{align}
\end{subequations}
\end{widetext}

Since the continuous functions are dense in $L^{1}$, let $\{\mathcal{G}_{j}^{l}\}$ be continuous functions supported on $[0,T^{D} + 1/l]$ such that $\mathcal{G}_{j}^{l} \to\CTLB_{j}$ on $[0,\infty)$.  Then for each $l$, 
\begin{widetext}
\begin{equation}\lim_{k\to\infty}\int_{0}^{|T_{k}^{C} - T^{C}|}|\CTLB_{j}(t)|\,\rmd t \leq \|\CTLB_{j} - \mathcal{G}_{j}^{l}\| + \lim_{k\to\infty}\int_{0}^{|T_{k}^{C} - T_{C}|}|\mathcal{G}_{j}^{l}(t)|\,\rmd t = \|\CTLB_{j} - \mathcal{G}_{j}^{l}\|.\label{eqn:limMiddleConcatTerm}\end{equation}
\end{widetext}
For any $\epsilon>0$ there exists $L$ such that for all $l>L$, $\|\CTLB_{j} - \mathcal{G}_{j}^{l}\|<\epsilon$, so we see that the left hand side of \eqref{eqn:limMiddleConcatTerm} is zero.  Likewise, the third term of \eqref{eqn:concatBound} can be bounded for each $l$ as 
\begin{widetext}
\begin{equation}\lim_{k\to\infty}\int_{0}^{T^{D}}|\CTLB_{j}(t + |T_{k}^{C}-T^{C}|) - \CTLB_{j}(t)|\,\rmd t \leq 2\|\CTLB_{j} - \mathcal{G}_{j}^{l}\| + \lim_{k\to\infty}\int_{0}^{T^{D}+ \frac{1}{l}}|\mathcal{G}_{j}^{l}(t + |T_{k}^{C} - T^{C}|) - \mathcal{G}_{j}^{l}(t)|\,\rmd t.\label{eqn:limEndConcatTerm}\end{equation}
\end{widetext}
Since $\mathcal{G}_{j}^{l}$ is a continuous real-valued function on a compact interval, by the Heine(-Cantor) theorem, it is uniformly continuous, so for each $\epsilon >0$, there exists $K$ such that for all $k>K$, $|\mathcal{G}_{j}^{l}(t+|T_{k}^{C}-T^{C}|) - \mathcal{G}_{j}^{l}(t)| \leq \epsilon$ for all $t\in [0,T^{D}+1/l]$.  So the integral on the right hand side of \eqref{eqn:limEndConcatTerm} is bounded by $\epsilon(T^{D} + 1/l)$ for any arbitrarily small $\epsilon>0$ and so it vanishes in the limit as $k\to \infty$, meaning that the left hand side is bounded by $2\|\CTLB_{j} - \mathcal{G}_{j}^{l}\|$ for any $l$.  Since this can be made as small as desired, the left hand side of \eqref{eqn:limEndConcatTerm} must vanish.  So we finally conclude that 
\begin{equation}d(C_{k}\star D_{k},C\star D) \leq d(C_{k}\star D_{k}, C_{k}\star D) + d(C_{k}\star D, C\star D) \to 0\end{equation} so that concatenation is a continuous operation in the $\CtlSp$ metric.

\bigskip

\section{Uniform Continuity of the Cross-Section Map\label{sec:contCrossSec}}
Recall from Lemma \ref{lem:CrossSectionMap} that the cross-section map $\sigma:\mathcal{W}\times\mathcal{W}\subset\MFD\to\CtlSp$ was defined to be $\sigma(x,y) = F(x,\varphi(x,y))$, where $\mathcal{W}$ is an open neighborhood of $x_{0}\in\MFD$, $F:\mathbb{R}^{n}\to\mathbb{K}$, and $\varphi:\mathcal{W}\times \mathcal{W}\to\mathbb{R}^{n}$.  Since $\varphi$ was defined and shown to be $C^{1}$ by the implicit function theorem, the proof that $\sigma$ is continuous reduces to showing that $F$ is continuous.  Then \emph{uniform} continuity follows from observing that, by restricting to a smaller neighborhood of $x_{0}$ if necessary, $\sigma$ is continuous on $\overline{\mathcal{W}}\times\overline{\mathcal{W}}$, which is compact.  So invoking Heine(-Cantor), we can conclude that $\sigma$ is uniformly continuous.

Turning now to the proof that $F$ is continuous, let us first consider maps \begin{equation}\hat{R}^{\nu}(\VF_{0}, Y_{1},\dots, Y_{\nu},\xi_{1},\dots,\xi_{\nu}) = (T,\CTL_{1},\dots,\CTL_{M})\in\CtlSp\end{equation} that produce the piecewise constant controls whose integral curves are exactly the sequence of exponentials in the definition of $R^{\nu}$, hence $\EP\circ\hat{R}^{\nu} = R^{\nu}$.  Then there exist $S_{k}:\mathbb{R}^{n}\to\CtlSp$ for $k = 1,\dots, 2^{\nu}+2^{\nu-1}-2$, such that for fixed $Y_{1},\dots,Y_{\nu}$, $\hat{R}^{\nu} = S_{1}\star S_{2}\star\dots\star S_{2^{\nu}+2^{\nu-1}-2}$, and where each $S_{k}(\vec{\xi}) = (T_{k},\CTLB_{1}^{k},\dots,\CTLB_{M}^{k})$ depends on only one of the $\xi_{j}$'s, is defined on an interval of length $T_{k} = \xi_{j}^{2\alpha_{\nu}}$, and each $\CTLB_{i}^{k}$ takes on a constant value proportional to $\xi_{j}^{1 - 2\alpha_{\nu}}$.  Now, it is easy to see that each $S_{k}$ is continuous in $\vec{\xi}$.  So, by the continuity of $\star$ proved in Appendix \ref{sec:contControlConcat}, $\hat{R}^{\nu}$ depends continuously on $\vec{\xi}$.  Then, since $F(r)$ is formed by concatenating the output of these $\hat{R}^{\nu}$ functions with the input of each $\hat{R}^{\nu}$ depending continuously on some $y_{j}$, we have that $F$ is continuous.  And by the above arguments, we conclude that $\sigma$ is uniformly continuous.

\bigskip

\section{Continuity of the Lifting Function\label{sec:contLifting}}
In this appendix, we prove that the (local) lifting function $\lambda_{\mathcal{W}}:X_{\mathcal{W}}\to P\CtlSp$ is a continuous map in the appropriate topologies.  Recall that $\mathcal{W}\subset\MFD$ is an open set and $X_{\mathcal{W}}\subset \EP_{b}^{-1}(\mathcal{W})\times P\mathcal{W}$ consists of pairs $(C,\gamma)$ such that $\EP_{b}(C) = \gamma(0)$.  $P\mathcal{W}$ is given the relative topology as a subset of the path space $P\MFD$, which itself is given the compact-open topology.  Since $\MFD$ is a metric space under the bi-invariant topological metric induced by the Riemannian metric from the real Hilbert-Schmidt inner product, both $P\MFD$ and $P\mathcal{W}$ are also metric spaces.  Likewise $P\CtlSp$, $\EP^{-1}(\mathcal{W})$, and $X_{\mathcal{W}}$ are all metric spaces.  So to prove continuity of $\lambda_{\mathcal{W}}$, it suffices to prove convergence of $\lambda_{\mathcal{W}}(C_{k},\gamma_{k})$ to $\lambda_{\mathcal{W}}(C,\gamma)$ when $(C_{k},\gamma_{k})\to(C,\gamma)$.

Recall that $\lambda_{\mathcal{W}}$ was defined by $\lambda_{\mathcal{W}}(C,\gamma) = C\star\sigma(\EP_{b}(C),\gamma(s))$, where $\sigma$ is the cross-section map defined in Lemma \ref{lem:CrossSectionMap} whose uniform continuity was established in Appendix \ref{sec:contCrossSec}.  By the triangle inequality and the definitions of the metric on $\CtlSp$ and of the concatenation operator $\star$, 
\begin{widetext}
\begin{subequations}
\begin{align}
	d\big(\lambda_{\mathcal{W}}(C_{k},\gamma_{k}),\lambda_{\mathcal{W}}(C,\gamma)\big) & = \sup_{s\in[0,1]}d\big(C_{k}\star\sigma(\EP_{b}(C_{k}), \gamma_{k}(s)), C\star\sigma(\EP_{b}(C), \gamma(s))\big)\\
	 & \leq  \sup_{s\in[0,1]}d\big(C_{k}\star\sigma(\EP_{b}(C_{k}), \gamma_{k}(s)), C_{k}\star\sigma(\EP_{b}(C), \gamma(s))\big)\nonumber\\
	 & \qquad  + \sup_{s\in[0,1]}d\big(C_{k}\star\sigma(\EP_{b}(C), \gamma(s)), C\star\sigma(\EP_{b}(C), \gamma(s))\big)\\
	 & = \sup_{s\in[0,1]}d\big(\sigma(\EP_{b}(C_{k}), \gamma_{k}(s)), \sigma(\EP_{b}(C), \gamma(s))\big)\nonumber\\
	 & \qquad  + \sup_{s\in[0,1]}d\big(C_{k}\star\sigma(\EP_{b}(C), \gamma(s)), C\star\sigma(\EP_{b}(C), \gamma(s))\big).\label{eqn:liftingTriangleIneq}
\end{align}
\end{subequations}
\end{widetext}
Consider the first terms on the right hand side of \eqref{eqn:liftingTriangleIneq}.  Since by assumption $\gamma_{k}\to\gamma$ uniformly and $C_{k}\to C$, it follows by continuity of $\EP_{b}$ that $\EP_{b}(C_{k})\to\EP_{b}(C)$, and therefore $(\EP_{b}(C_{k}), \gamma_{k}(s))\to(\EP_{b}(C),\gamma(s))$ uniformly over $s\in[0,1]$.  Then the uniform continuity of $\sigma$ implies that $\sigma(\EP_{b}(C_{k}), \gamma_{k}(s))\to \sigma(\EP_{b}(C), \gamma(s))$ uniformly, so this first term converges to zero as $k\to\infty$.

Turning now to the second term on the right hand side of \eqref{eqn:liftingTriangleIneq}, observe that $\overline{\{C_{k}\}}$ is a compact subset of $\CtlSp$ since any open cover must admit an open set containing $C$ and this set will also contain all but finitely many of the $C_{k}$'s, so that only finitely many additional sets from the cover are needed to construct a finite subcover.  Also $\{\sigma(\EP_{b}(C), \gamma(s))\;:\; s\in[0,1]\}$ is a compact subset of $\CtlSp$ since it is the continuous image of a compact set.  Therefore $A:= \overline{\{C_{k}\}}\times\{\sigma(\EP_{b}(C), \gamma(s))\;:\;s\in[0,1]\}$ is compact in $\CtlSp\times\CtlSp$.  Since $\star$ is continuous over $\CtlSp\times\CtlSp$, it is uniformly continuous over $A$ by the Heine(-Cantor) theorem.  So for any $\epsilon>0$ there exists a $\delta>0$ such that for any $(\alpha_{1},\beta_{1}), (\alpha_{2},\beta_{2})\in A$ with $d(\alpha_{1},\alpha_{2})<\delta$ and $d(\beta_{1},\beta_{2})<\delta$, $d(\alpha_{1}\star\beta_{1}, \alpha_{2}\star\beta_{2})<\epsilon)$.  Then for any $\epsilon>0$, there exists $K$ such that for all $k>K$, $d(C_{k},C)<\delta$, and therefore $d\big(C_{k}\star \sigma(\gamma(s)\EP(C)^{\dag}), C\star\sigma(\gamma(s)\EP(C)^{\dag})\big)<\epsilon$ for all $s\in[0,1]$.  Therefore the second term on the right hand side of \eqref{eqn:liftingTriangleIneq} converges to zero, so the left hand side converges to zero, and therefore the lifting function $\lambda_{\mathcal{W}}$ is continuous with respect to the given topologies.

\bigskip

\section{Dynamic Homotopy of Scalar Control on 2-Level Systems\label{sec:dynHomQubits}}

Consider a right-invariant affine control system of the form \begin{equation}i\frac{\rmd U}{\rmd t} = \big(H_{0} + \CTL(t)H_{1}\big)U(t) \qquad \qquad U(0) = \mathbb{I}\label{eqn:scalarControlSystem}\end{equation}
on the $2\times 2$ special unitary group $\SU(2)$, hence $H_{0}$ and $H_{1}$ are Hermitian and trace zero.  We will assume that the system satisfies the Lie algebra rank condition (LARC), so that $[iH_{0},iH_{1}]$ lies outside the linear span of $\{iH_{0}, iH_{1}\}$.  In this appendix, we describe the relationship between the topological study of the trajectory space of this scalar control problem and Smale's work on regular homotopy.

For a given control $\CTL$, let 
\begin{subequations}
\begin{align}
	A_{0} & := \frac{iH_{1}}{\|H_{1}\|} & A(t) := U^{\dag}(t)A_{0}U(t)\\
	B_{0} & := \frac{[iH_{0}, iH_{1}]}{\|[iH_{0}, iH_{1}]\|} & B(t) := U^{\dag}(t)B_{0}U(t)\\
	C_{0} & := \frac{iH_{0} - \langle iH_{0}, A_{0}\rangle A_{0}}{\|iH_{0} - \langle iH_{0}, A_{0}\rangle A_{0}\|}  & C(t) := U^{\dag}(t)C_{0}U(t)
\end{align}
\end{subequations}
so that $A(t)$, $B(t)$, and $C(t)$ are curves on the unit sphere within the three dimensional $\su(2)$.  Moreover, $A_{0}$, $B_{0}$, and $C_{0}$ may be seen to comprise an orthonormal basis for $\su(2)$ under the Hilbert-Schmidt inner product, so that $A(t)$, $B(t)$, and $C(t)$ are a rotating orthonormal frame.  Then it is readily found that
\begin{subequations}
\begin{align}
	\frac{\rmd A}{\rmd t}(t) & = U^{\dag}(t)[iH(t), A_{0}]U(t) = \alpha B(t)\\
	\frac{\rmd B}{\rmd t}(t) & = U^{\dag}(t)[iH(t), B_{0}]U(t) = -\alpha A(t) + \big(\beta + \gamma\CTL(t)\big) C(t)\\
	\frac{\rmd C}{\rmd t}(t) & = U^{\dag}(t)[iH(t), C_{0}]U(t) = -\big(\beta + \gamma\CTL(t)\big)B(t)
\end{align}
where \begin{equation}\alpha := \|[iH_{0}, A_{0}]\| \qquad\qquad \beta := \langle iH_{0},A_{0}\rangle \|[C_{0},A_{0}]\| \qquad \qquad \gamma := \|[C_{0}, iH_{1}]\|.\end{equation}
\end{subequations}
So $A(t)$ travels at constant speed $\alpha$ over $S^{2}\subset\su(2)$ with continuous velocity vector $\alpha B(t)$.  Therefore, $A(t)$ is a \emph{regular curve} \cite{Smale1958} on $S^{2}\subset \su(2)$.  The covariant derivative of $B(t)$ is just $\frac{D B}{dt} = \big(\beta + \gamma \CTL(t)\big)C(t)$ so that $B(t)$ can change arbitrarily quickly in either direction ($C$ or $-C$) by tuning the control $\CTL(t)$.  As a result, \emph{any} regular curve in $S^{2}\subset\su(2)$ starting at $A_{0}$ with initial normalized velocity $B_{0}$ and with constant speed $\alpha$ may be generated in this way by choosing the correct control function $\CTL$ and final time $T$.  In other words, since the curves $A(t)$, $B(t)$, and $C(t)$ uniquely determine the propagator in $\PU(2)$, there is a one-to-one correspondence between trajectories of \eqref{eqn:scalarControlSystem} on $\PU(2)$ and regular curves on $S^{2}\in\su(2)$ starting at $A_{0}$ (up to reparametrization).  This construction is essentially the same as the example described by Montgomery \cite{Montgomery1996} relating control trajectories on $\SO(3)$ ($\simeq \PU(2)$) to regular curves on $S^{2}$.

	Stephen Smale studied regular homotopies of regular curves on Riemannian manifolds \cite{Smale1958}. In particular, he looked at the set $E(p_{1}, v_{1}, p_{2}, v_{2})$ of regular curves starting at a point $p_{1}$ with initial velocity vector $v_{1}$ and ending at a point $p_{2}$ with final velocity $v_{2}$.  He derived a weak homotopy equivalence between $E(p_{1}, v_{1}, p_{2}, v_{2})$ on a manifold $\mathcal{M}$ and $\Omega \rmT_{0}(\mathcal{M})$, the loop space on the unit tangent bundle of the manifold, where the unit tangent bundle $\rmT_{0}(\mathcal{M})$ is the bundle of all unit length (i.e. normalized) tangent vectors to $\mathcal{M}$.  Now, the unit tangent bundle of the sphere, $\rmT_{0}(S^{2})$, can be identified with $\SO(3)$ by thinking of the first column of $O\in\SO(3)$ as the point in $S^{2}$, the second column as an arbitrary unit tangent vector to $S^{2}$ at that point, and the third column as simply the cross product of the first two, carrying no additional information.  Thus, since $\SO(3)\simeq \PU(2)$, Smale's theorem shows that $E(A(0), B(0), A(1), B(1))$ (where curves have been reparametrized to travel at constant speed on $[0,1]$, eliminating the final time $T$ as a variable) on $S^{2}\subset \su(2)$ is weakly homotopy equivalent to $\Omega \PU(2)$, the loopspace on $\PU(2)$.

	It may be observed that the topology given by Smale to the space of regular curves on $S^{2}$ is identical to that induced by the compact-open topology on the path space of $\PU(2)$, so that the one-to-one correspondence established above between trajectories of the control problem on $\PU(2)$ and regular curves on $S^{2}$ is a homeomorphism between these spaces.  This result implies the existence of a weak homotopy equivalence between the set of trajectories of \eqref{eqn:scalarControlSystem} joining $\hat{\mathbb{I}}\in\PU(2)$ to a target $\hat{W}\in\PU(2)$ and the loop space of $\PU(2)$.  Thus, we have the following theorem:
\begin{theorem}
\label{thm:projTrajHomEquiv}
Given any $\hat{W}\in\PU(2)$, the space of all trajectories from $\mathbb{I}$ to $\hat{W}$ of the control system \eqref{eqn:scalarControlSystem} is weakly homotopy equivalent to $\Omega\PU(2)$, the loop space of $\PU(2)$.  This loop space (and therefore the indicated set of trajectories) comprises two connected components.
\end{theorem}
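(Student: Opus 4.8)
The plan is to assemble the ingredients prepared in the discussion above and then read off the component count from $\pi_{1}(\PU(2))$. First I would fix the target $\hat W\in\PU(2)$ and note that the terminal data $A(1)=\hat W^{\dag}A_{0}\hat W$ and $B(1)=\hat W^{\dag}B_{0}\hat W$ depend only on $\hat W$ (conjugation descends to $\PU(2)$), so that the set of trajectories of \eqref{eqn:scalarControlSystem} from $\mathbb{I}$ to $\hat W$, reparametrized to constant speed on $[0,1]$, is carried by the frame correspondence $\CTL\mapsto(A,B,C)$ onto Smale's space $E(A(0),B(0),A(1),B(1))$ of regular curves on $S^{2}\subset\su(2)$ with those prescribed endpoint positions and velocities. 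As observed above, Smale's topology on the space of regular curves coincides with the topology induced from the compact-open topology on $P\PU(2)$, so this correspondence is a homeomorphism.

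Next I would invoke Smale's theorem \cite{Smale1958}: for a Riemannian manifold $\mathcal{M}$, the space $E(p_{1},v_{1},p_{2},v_{2})$ of regular curves with prescribed endpoint positions and velocities is weakly homotopy equivalent to $\Omega\rmT_{0}(\mathcal{M})$, the loop space of the unit tangent bundle. Specializing to $\mathcal{M}=S^{2}$ and using $\rmT_{0}(S^{2})\cong\SO(3)\cong\PU(2)$ (base point in the first column of $O\in\SO(3)$, unit tangent vector in the second, their cross product in the third), and composing with the homeomorphism of the previous step, yields the asserted weak homotopy equivalence between the space of trajectories from $\mathbb{I}$ to $\hat W$ and $\Omega\PU(2)$.

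Finally, for the component count I would use $\pi_{0}(\Omega X)\cong\pi_{1}(X)$ for path-connected $X$ together with $\PU(2)\cong\SU(2)/\{\pm\mathbb{I}\}\cong\SO(3)\cong\mathbb{R}\mathrm{P}^{3}$, whence $\pi_{1}(\PU(2))\cong\mathbb{Z}/2\mathbb{Z}$ (the $N=2$ case of $\pi_{1}(\PUN)\cong\mathbb{Z}/N\mathbb{Z}$ noted earlier). Thus $\Omega\PU(2)$ has exactly two path components, and since a weak homotopy equivalence is a bijection on $\pi_{0}$, so does the trajectory space. The main obstacle is not a computation but the topological bookkeeping behind the first paragraph: one must verify that the frame map $\CTL\mapsto(A,B,C)$ and the passage to constant-speed parametrization on $[0,1]$ are bicontinuous and do not alter the weak homotopy type, so that Smale's theorem transports verbatim; everything downstream of that is routine.
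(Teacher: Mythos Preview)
Your proposal is correct and follows essentially the same route as the paper: identify trajectories on $\PU(2)$ with Smale's space $E(A(0),B(0),A(1),B(1))$ of regular curves on $S^{2}$ via the rotating frame (checking that the topologies match so this is a homeomorphism), invoke Smale's weak homotopy equivalence $E\simeq_{w}\Omega\rmT_{0}(S^{2})$, and use $\rmT_{0}(S^{2})\cong\SO(3)\cong\PU(2)$. Your explicit derivation of the two components from $\pi_{0}(\Omega\PU(2))\cong\pi_{1}(\PU(2))\cong\mathbb{Z}/2\mathbb{Z}$ is a small addition the paper leaves implicit, and your caveat about the bicontinuity of the frame/reparametrization map is exactly the point the paper flags when it asserts the correspondence is a homeomorphism.
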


	Moreover, using the fact that $\SU(2)$ is the universal covering group for $\PU(2)$, it is straightforward to lift this result to $\SU(2)$, yielding the theorem
\begin{theorem}
\label{thm:trajHomEquiv}
Given any $W\in\SU(2)$, the space of all trajectories from $\mathbb{I}$ to $W$ of the control system \eqref{eqn:scalarControlSystem} is weakly homotopy equivalent to $\Omega\SU(2)$, the loop space of $\SU(2)$.  In particular, this loop space (and therefore the indicated set of trajectories) is connected.
\end{theorem}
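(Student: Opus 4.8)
The plan is to push Theorem~\ref{thm:projTrajHomEquiv} across the universal covering homomorphism $q:\SU(2)\to\PU(2)$, which has kernel $\{\pm\mathbb{I}\}$ and for which $\pi_{1}(\PU(2))\cong\mathbb{Z}/2\mathbb{Z}$. Since $q$ is a Lie group covering homomorphism it is a local diffeomorphism intertwining the control system \eqref{eqn:scalarControlSystem} on $\SU(2)$ with the corresponding right-invariant system on $\PU(2)$; hence the unique $\mathbb{I}$-based lift of any $\PU(2)$-trajectory issuing from $\hat{\mathbb{I}}$ is again a solution of \eqref{eqn:scalarControlSystem}, and conversely $q$ carries $\SU(2)$-trajectories to $\PU(2)$-trajectories. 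Writing $\mathcal{T}_{\SU}(V)$ for the space of trajectories of \eqref{eqn:scalarControlSystem} from $\mathbb{I}$ to $V\in\SU(2)$ and $\mathcal{T}_{\PU}(\hat{W})$ for the $\PU(2)$-trajectories from $\hat{\mathbb{I}}$ to $\hat{W}$, each with the compact-open topology (which, by the remark preceding Theorem~\ref{thm:projTrajHomEquiv}, agrees with the topology used there), uniqueness of lifts with a fixed initial point gives
\begin{equation}
\mathcal{T}_{\PU}(\hat{W}) = q_{*}\mathcal{T}_{\SU}(W) \sqcup q_{*}\mathcal{T}_{\SU}(-W),
\end{equation}
in which $q_{*}$ restricts to a homeomorphism onto each summand (continuity of path-lifting in the compact-open topology being standard).

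Next I would show each summand is a single path-component of $\mathcal{T}_{\PU}(\hat{W})$. Each is open, since the endpoint of the $\mathbb{I}$-based lift is a locally constant function on $\mathcal{T}_{\PU}(\hat{W})$, hence also closed (its complement being the other summand); both are nonempty because the reachable set of \eqref{eqn:scalarControlSystem} from $\mathbb{I}$ is all of $\SU(2)$ (the Lie algebra rank condition on the compact connected group $\SU(2)$ forcing controllability; see \cite{Jurdjevic1997}), so in particular $W$ and $-W$ are both reachable. By Theorem~\ref{thm:projTrajHomEquiv}, $\mathcal{T}_{\PU}(\hat{W})$ is weakly homotopy equivalent to $\Omega\PU(2)$, whose set of path-components is a torsor over $\pi_{1}(\PU(2))\cong\mathbb{Z}/2\mathbb{Z}$; since a weak homotopy equivalence is a bijection on path-components, $\mathcal{T}_{\PU}(\hat{W})$ has exactly two path-components, which must therefore be the two nonempty clopen summands. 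Consequently $\mathcal{T}_{\SU}(W)$ is homeomorphic to one path-component of $\mathcal{T}_{\PU}(\hat{W})$, and---since a weak homotopy equivalence restricts to a weak homotopy equivalence between corresponding path-components---$\mathcal{T}_{\SU}(W)$ is weakly homotopy equivalent to one path-component of $\Omega\PU(2)$.

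It remains to identify that component with $\Omega\SU(2)$. The based-loop map $\Omega q:\Omega\SU(2)\to\Omega\PU(2)$ has image in the component $\Omega_{0}\PU(2)$ of the constant loop (a loop in $\PU(2)$ lifts to a loop in $\SU(2)$ iff it is nullhomotopic, and $\SU(2)$ is simply connected), and on homotopy groups it induces $\pi_{k}(\Omega\SU(2))\cong\pi_{k+1}(\SU(2))\to\pi_{k+1}(\PU(2))\cong\pi_{k}(\Omega_{0}\PU(2))$, which is an isomorphism for $k\ge1$ (covering maps induce isomorphisms on $\pi_{n}$ for $n\ge2$) and a bijection for $k=0$ (both groups vanish, as $\pi_{1}(\SU(2))=\pi_{1}(S^{3})=0$); hence $\Omega q$ is a weak homotopy equivalence onto $\Omega_{0}\PU(2)$. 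As $\Omega\PU(2)$ is a group-like $H$-space, its two path-components are homotopy equivalent (translate by a loop in the nontrivial class), so every path-component of $\Omega\PU(2)$ is weakly homotopy equivalent to $\Omega\SU(2)$. Chaining the equivalences yields $\mathcal{T}_{\SU}(W)\simeq_{w}\Omega\SU(2)$, and connectedness is immediate since $\pi_{0}(\Omega\SU(2))\cong\pi_{1}(\SU(2))=\pi_{1}(S^{3})=0$.

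The step I expect to be the main obstacle is not any single hard computation but the compatibility bookkeeping above: verifying that the clopen summand $q_{*}\mathcal{T}_{\SU}(W)$ corresponds, under the weak homotopy equivalence furnished by Theorem~\ref{thm:projTrajHomEquiv}, to a genuine path-component of $\Omega\PU(2)$ rather than to something only ``weakly'' such. This is dispatched by the general facts that a weak homotopy equivalence is a bijection on path-component sets and restricts to a weak homotopy equivalence between corresponding components; once the covering-space lift is recognized as a homeomorphism onto a clopen subset of $\mathcal{T}_{\PU}(\hat{W})$, no geometric input beyond Theorem~\ref{thm:projTrajHomEquiv} and elementary covering theory is required.
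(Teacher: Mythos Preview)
Your argument is correct and follows precisely the route the paper indicates: lift Theorem~\ref{thm:projTrajHomEquiv} along the universal covering $q:\SU(2)\to\PU(2)$. The paper merely asserts that ``it is straightforward to lift this result to $\SU(2)$'' without further detail; you have supplied the bookkeeping (the clopen decomposition of $\mathcal{T}_{\PU}(\hat{W})$, the component count via $\pi_{1}(\PU(2))$, and the identification of each component of $\Omega\PU(2)$ with $\Omega\SU(2)$) that the paper omits.
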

In addition, since Smale has shown \cite{Smale1958} that the endpoint map on the space of regular curves on $S^{2}$ is a Serre fibration \cite{Hatcher2002}, and since this space of curves in homeomorphic to the space of $\PU(2)$ trajectories of \eqref{eqn:scalarControlSystem}, the fact that the projection $\pi:\PU(2)\to\mathbb{C}P^{1}$ onto the Riemann sphere $\mathbb{C}P^{1}$ (i.e. the Bloch sphere; i.e. the complex projective line) is a fibration may be used to show that the endpoint map on the space of $\mathbb{C}P^{1}$ trajectories for some initial state $\rho_{0}\neq \mathbb{I}/2$ is also a Serre fibration.  Similar to the proof of theorem \ref{thm:homotopyEquiv}, using the contractibility of this space of trajectories, this implies
\begin{theorem}
\label{thm:riemannSphereTrajHomEquiv}
Given any $2\times 2$ density matrix $\rho_{0}\neq \mathbb{I}/2$, and any $\rho_{1}$ isoentropic to $\rho_{0}$ (i.e. possessing the same eigenvalues), the space of all trajectories from $\rho_{0}$ to $\rho_{1}$ of the control system
\begin{equation}i\frac{\rmd \rho}{\rmd t} = [H_{0}+\CTL(t)H_{1},\rho(t)] \qquad\qquad \rho(0) = \rho_{0}\end{equation} is weakly homotopy equivalent to $\Omega\mathbb{C}P^{1}$, the loop space of $\mathbb{C}P^{1}\simeq S^{2}$.  In particular, this loop space (and therefore the indicated set of trajectories) is connected.
\end{theorem}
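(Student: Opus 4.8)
The strategy is to imitate the proof of Theorem~\ref{thm:homotopyEquiv} — realize the set of trajectories from $\rho_0$ to $\rho_1$ as a fiber of an endpoint map whose total space is contractible and whose base is path-connected, then read off the homotopy type from the long exact sequence — but with one essential modification. The scalar von~Neumann system has a single control vector field (the Hamiltonian field of $H_1$) and hence does \emph{not} satisfy the Lie-algebra rank condition required by Theorem~\ref{thm:homotopyEquiv}, so the cross-section/lifting-function construction is unavailable; instead one must transfer the fibration property down from the $\PU(2)$ level, where Smale's theorem (together with this appendix's identification of $\PU(2)$-trajectories with regular curves on $S^2$) already supplies it. First I would record the geometry: since $\rho_0\neq\mathbb{I}/2$ has two distinct eigenvalues, its $\PU(2)$-conjugation orbit is the Bloch sphere $\cong\mathbb{C}P^1\cong S^2$, and the orbit map $\pi:\PU(2)\to\mathbb{C}P^1$, $\hat U\mapsto\hat U\rho_0\hat U^{\dag}$, is a principal $\mathrm U(1)$-bundle, hence a Serre fibration; the hypothesis that $\rho_1$ is isoentropic to $\rho_0$ is precisely what places $\rho_1$ in this orbit, so that the set of trajectories from $\rho_0$ to $\rho_1$ is genuinely a fiber of the endpoint map on $\mathbb{C}P^1$-trajectories.

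Next I would set up the two trajectory spaces, both reparametrized to $[0,1]$ (so the final time $T$ is absorbed into a varying speed) and carrying the compact-open topology: $\mathcal{T}_{\PU(2)}$, the trajectories of \eqref{eqn:scalarControlSystem} issuing from $\mathbb{I}$ with free endpoint, and $\mathcal{T}_{\mathbb{C}P^1}$, the trajectories of the controlled von~Neumann equation issuing from $\rho_0$ with free endpoint, together with their endpoint maps $\mathrm{ev}:\mathcal{T}_{\PU(2)}\to\PU(2)$ and $\mathrm{ev}':\mathcal{T}_{\mathbb{C}P^1}\to\mathbb{C}P^1$. By uniqueness of solutions of the Schr\"odinger equation, propagating with a fixed control and then conjugating $\rho_0$ shows that every $\mathbb{C}P^1$-trajectory is the $\pi$-image of a $\PU(2)$-trajectory, so conjugation defines a continuous surjection $\Phi:\mathcal{T}_{\PU(2)}\to\mathcal{T}_{\mathbb{C}P^1}$ with $\pi\circ\mathrm{ev}=\mathrm{ev}'\circ\Phi$ (in fact $\Phi$ is a bijection, since a von~Neumann trajectory cannot linger on the $H_1$-axis of the sphere — the rank condition forbids $H_0$ from commuting with those two points — so the control, hence the upstairs lift normalized by $\hat U(0)=\mathbb{I}$, is recovered almost everywhere from the trajectory). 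Both $\mathcal{T}_{\PU(2)}$ and $\mathcal{T}_{\mathbb{C}P^1}$ are contractible by the deformation retraction that shrinks the final time to zero, exactly as the map $\rho_s$ does in the proof of Theorem~\ref{thm:homotopyEquiv}.

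The crux is to show that $\mathrm{ev}'$ is a Serre fibration. Under this appendix's homeomorphism between $\mathcal{T}_{\PU(2)}$ and the space of regular curves on $S^2\subset\su(2)$ — under which $\mathrm{ev}$ becomes the ``endpoint together with unit velocity'' map into $T_0 S^2\cong\PU(2)$ — the map $\mathrm{ev}$ is exactly the one Smale~\cite{Smale1958} proved to be a Serre fibration. Combining this with the fact that $\pi$ is a fibration, one can lift a homotopy through $\mathrm{ev}'$ in three stages: lift it first through $\pi$ to a homotopy in $\PU(2)$, lift that through $\mathrm{ev}$ to a homotopy in $\mathcal{T}_{\PU(2)}$ starting from the $\Phi$-preimage of the given lift, and then push it forward by $\Phi$, the relation $\pi\circ\mathrm{ev}=\mathrm{ev}'\circ\Phi$ guaranteeing it covers the original homotopy. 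The step I expect to require the most care is verifying that $\Phi$ is a \emph{homeomorphism} and not merely a continuous bijection — that is, matching the compact-open topology on $\mathcal{T}_{\mathbb{C}P^1}$ with the topology Smale places on regular curves — since it is this that makes ``push forward by $\Phi$'' and ``$\Phi$-preimage'' legitimate; the remainder of the lifting argument is routine.

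With $\mathrm{ev}':\mathcal{T}_{\mathbb{C}P^1}\to\mathbb{C}P^1$ established as a Serre fibration whose total space is contractible and whose base is path-connected, its long exact sequence of homotopy groups collapses to isomorphisms $\pi_n(\mathbb{C}P^1)\cong\pi_{n-1}(\mathrm{ev}'^{-1}(\rho_1))$ for every $n\geq1$ (the weak-equivalence analogue of Lemma~\ref{lemma:fiberHomotopyEquiv}); hence the space of trajectories from $\rho_0$ to $\rho_1$ is weakly homotopy equivalent to $\Omega\mathbb{C}P^1$. Finally, $\mathbb{C}P^1\cong S^2$ is simply connected, so $\pi_0(\Omega\mathbb{C}P^1)\cong\pi_1(S^2)=0$, and therefore $\Omega\mathbb{C}P^1$ — and with it the indicated set of trajectories — is connected.
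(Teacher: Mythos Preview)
Your proposal is correct and follows essentially the same route as the paper's sketch: use Smale's result that the endpoint map on regular curves on $S^{2}$ is a Serre fibration, identify that space with $\mathcal{T}_{\PU(2)}$, compose with the bundle projection $\pi:\PU(2)\to\mathbb{C}P^{1}$ to conclude that $\mathrm{ev}'$ is a Serre fibration, then invoke contractibility of the trajectory space and the long exact sequence. The paper states this argument in a single sentence before the theorem; you have unpacked it, and in particular you correctly isolate the one nontrivial point the paper glosses over --- that $\Phi$ must be a homeomorphism (or at least that the initial lift $\tilde h_{0}$ can be pulled back continuously to $\mathcal{T}_{\PU(2)}$) for the composition-of-fibrations argument to go through.
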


It is speculated that these results could be pulled back to control space as in Theorem \ref{thm:homotopyEquiv}, however this would require redoing Smale's arguments from the control perspective and with a specific topology on the control space in mind.

\bigskip

\bibliography{DynamicHomotopy_arXiv}

\end{document}